\newcommand{\p}{\mathbb{P}}
\newcommand{\E}{\mathrm{E}}
\newcommand{\Var}{\mathrm{Var}}
\newcommand{\brho}{\boldsymbol{\rho}}
\newcommand{\bt}{\pmb{\theta}}
\theoremstyle{plain}
\newtheorem{theorem}{Theorem}[section]
\newtheorem{lemma}[theorem]{Lemma}
\newtheorem{corollary}[theorem]{Corollary}
\theoremstyle{definition}
\newtheorem{example}[theorem]{Example}
\newtheorem{remark}[theorem]{Remark}
\begin{document}
\begin{center}
\Large{\sc Parametric inference for proportional (reverse) hazard rate models with nomination sampling}
\end{center}
\begin{center}
{Mohammad Nourmohammadi, Mohammad Jafari Jozani\footnote{Corresponding author. Email:  m$_{-}$jafari$_{-}$jozani@umanitoba.ca.  Tel: 1 204 272 1563.},  and Brad C.\ Johnson  } \\
\end{center}
\begin{center}
{\it  University of Manitoba, Department of Statistics, Winnipeg, MB, CANADA, R3T 2N2 } 

\end{center}


\begin{abstract}
\noindent Randomized nomination sampling (RNS) is a rank-based sampling technique which  has been shown to be  effective in several nonparametric studies involving  environmental and ecological applications. In this paper, we  investigate parametric inference using RNS design for estimating the unknown vector of parameters ${\bt}$ in the proportional hazard rate  and proportional reverse hazard rate models. We examine both maximum likelihood (ML)  and method of moments (MM) methods  and investigate the relative precision of our proposed RNS-based estimators compared with those based on simple random sampling (SRS).  We introduce four types of  RNS-based data as well as   necessary EM algorithms for the ML estimation, and evaluate the performance of corresponding estimators  in estimating $\bt$.  We show  that  there are always  values of the design parameters on which RNS-based estimators are more efficient than those based on SRS.  Inference based on    imperfect ranking  is also explored and it is shown that the improvement holds even when the ranking is imperfect. Theoretical results are augmented with numerical evaluations and a case study.

\end{abstract}

\textbf{Keywords:} Randomized nomination sampling; Method of moments; Maximum likelihood; Modified maximum likelihood; Proportional hazard rate; Proportional reverse hazard rate; EM algorithm.


\section{Introduction}\label{intro}

Randomized nomination sampling (RNS) is a rank-based sampling scheme. Rank-based sampling schemes are data collection techniques which utilize the advantage of additional information available in the population to provide an artificially stratified sample with more structure. Providing more structured sample enables us to direct our attention toward units that  represent the underlying population. 
Let $X$ be an absolutely continuous random variable distributed according to the cumulative distribution function (CDF) $F(x; \bt)$ and the probability density function (PDF) $f(x; \bt)$, where $F$ is known and $\bt=(\theta_1, \ldots, \theta_p)^T\in \Theta\subseteq\mathbb{R}^p$ (p-dimensional Euclidean space), is unknown. Further, let $\{K_i: i\in\mathbb{N}\}$ be a sequence of independent random variables taking values in $\mathbb{N}$ (the natural numbers) with probabilities $\brho=\{(\rho_1,  \rho_2, \ldots): \sum_{i=1}^\infty\rho_i=1\}$ so that $\p(K_i=j)=\rho_j$, $j\in\{1, 2, \ldots\}$. Let $\{Z_i: i\in\mathbb{N}\}$ be a sequence of independent Bernoulli random variables with success probability $\zeta\in[0, 1]$, independent of $K_i$ and $X$.  The RNS design consists of drawing $m$ random sample sets of size $K_i$, $i=1, \ldots, m$, from the population for the purpose of ranking and finally nominating $m$ sampling units (one from each set) for final measurement. The nominee from  the $i$-th set is the largest unit of the set with  probability $\zeta$ or the smallest unit of the set with probability $1-\zeta$. The RNS observation $Y_i$ can be written as
$$Y_i=Z_iX_{K_i:K_i}+(1-Z_i)X_{1:K_i}, ~~~i=1, \ldots, m,$$
where $X_{1:K_i}$ and $X_{K_i:K_i}$ are respectively the smallest and the largest units of the $i^{th}$ set of size $K_i$. 

RNS was introduced by \citet{jafari2012randomized} and applied to the problem of estimating population total in finite population. Later, this sampling scheme was applied in constructing  confidence intervals for quantiles in finite populations  \citep{nourmohammadi2014confidence} and infinite populations \citep{nourmohammadi2014nonparametric}, as well as in constructing  tolerance intervals \citep{nourmohammadi2015distribution}. 
Some well-known examples of RNS are given below:
\begin{itemize}
\item[(1)] The choice of $K_i=1$, $i=1, \ldots, m$, results in the SRS design with observations denoted by $X_i$, $i=1, \ldots, m$. 
 
\item [(2)] The choice of $\zeta=1$ nominates the maximum from each set and results in a maxima nomination sampling (MANS) design (see \citealp{willemain1980estimating};  \citealp{boyles1986estimating}; \citealp{tiwari1988nonparametric};  \citealp{kvam1993estimating}; \citealp{tiwari1989quantile};  \citealp{jafari2010improved}; and \citealp{jafari2011control}).
  
\item [(3)] The choice of $\zeta=0$ nominates the minimum from each set and results in a minima nomination sampling (MINS) design (see \citealp{tiwari1989quantile} and \citealp{wells1990estimating}).
 
\item [(4)] The choice of $\zeta=\frac12$ and $k_i=k$, for a constant $k\in\mathbb{N}$, results in a randomized extreme ranked set sampling (RERSS) design (see \citealp{jafari2012randomized}).
  
\item [(5)] The choice of $Z_{2i-1}=1$ and $Z_{2i}=0$, and $K_i=k$, where $i=1, \ldots, m$, for a constant $k\in\mathbb{N}$ and an even number $m$, results in an extreme ranked set sampling (ERSS) design (see  \citealp{samawi1996estimating}, and  \citealp{ghosh2009unified})
   
\item [(6)] The choice of $Z_{2i-1}=1$ and $Z_{2i}=0$, and $K_i=i$, where  $i=1, \ldots, m$, for  an even number $m$ results in a moving extreme ranked set sampling (MRSS) design (see \citealp{al2001variation}).
\end{itemize}
 Note that $Z_i$, $i=1, \ldots, m$, in (5) and (6) are no longer independent and identically distributed (IID). RNS is a cost-effective method of selecting data in situations in which measuring  the characteristic of interest is difficult, expensive and/or destructive, but a small number of sampling units can be cheaply ranked and the minimum and the maximum observations can be easily identified. Unlike the regular ranked set sampling (RSS), RNS allows for an increase of  the set size without introducing too much ranking  error. Identifying only the extremes, rather than  providing a complete ranking on the units in the set, is more practical, since we need to identify successfully only the first or the last ordered unit. Regarding the randomness of the set size, while the RNS technique allows one to select the sets of fixed size $k$, i.e.   $\p(K_i=k)=1$ for some fixed $k$ and $i=1, \ldots, m$, providing the flexibility of choosing the sizes in random helps to apply the RNS scheme in circumstances where the set size might be random (see \citet{boyles1986estimating} and  \cite{gemayel2010optimal}). Another advantage of allowing the set size to be random is that, when $\rho_1>0$, randomized nomination sample is expected to contain a  simple random sample of size $m\rho_1$  in addition to a collection of extremal order statistics from various set sizes, which contain more information about the underlying population than SRS. RNS also has the flexibility to adjust the proportion of maximums and minimums by choosing an appropriate value for $\zeta\in[0, 1]$ based on the population shape. This reduces the concern in working with extremes when the underlying population is skewed.

The RNS-based statistical inference  may be made under various situations. For example, there might be the case where  $y_i$'s are  the only available information and no further information is provided on either $k_i$ or $z_i$, $i=1, \ldots, m$. There might also  be situations in which the size of sets or the number of maximums (and subsequently the number of minimums), or both are chosen in advance, instead of getting involved in a randomized process. In the situation where $k_i$ and/or $z_i$ are known, the CDF of $Y_i$ can be found by conditioning on $k_i$ and $z_i$, or both. The conditioning argument makes the theoretical investigation more complicated, but it provides  more efficient statistical inference. In this paper, both unconditional and conditional RNS are studied. Four types of RNS data are introduced, corresponding to situations where, for any set $i=1, \ldots, m$, (1) the triplet $(y_i, k_i, z_i)$ are all known, (2) only $(y_i, k_i)$ are known, (3) only $(y_i, z_i)$ are known, or  (4) only $y_i$ are known. These types of RNS data are, respectively, called RNS complete-data,  Type I, Type II, and Type III RNS incomplete-data.

We discuss RNS-based maximum likelihood (ML) and method of moments (MM) estimates of the population parameters when the underlying random variable follows the proportional hazard rate (PHR) or proportional reverse hazard rate (PRHR) model.  
Let $F_0$ be an absolutely continuous probability distribution function with density $f_0$, possibly depending on an unknown vector of parameters $\bt$,  and let $c=\text{sup}\{x\in \mathbb{R}: F_0(x) =0\}$ and $d=\text{inf}\{x\in \mathbb{R}: F_0(x) =1\}$ (where, by convention, we take $\text{inf}~\emptyset =-\text{sup}~\emptyset =\infty$) so that, if $X\sim F_0$, we have $-\infty\leq c\leq X \leq d \leq \infty$ with $F_0(c)=1-F_0(d)=0.$
The family of PHR models (based on $F_0$) is given by the family of distributions
\begin{align}\label{HR-cdf}
F(x; \bt)=1-[1-F_0(x)]^{1/\gamma(\bt)},
\end{align}
where $\gamma(\bt)>0$. Similarly, the family of PRHR models is given by the family of distributions
\begin{align}\label{IHR-cdf}
F(x; \bt)=[F_0(x)]^{1/\gamma(\bt)}.
\end{align}
The hazard rate function $H(x)$ and the reverse hazard rate function $R(x)$ at $x\in (c, d)$ are given respectively by
\begin{align}
H(x)=\frac{1}{\gamma(\bt)}\frac{f_0(x)}{1-F_0(x)}\quad\text{and}\quad R(x)=\frac{1}{\gamma(\bt)}\frac{f_0(x)}{F_0(x)}.
\end{align}
The PHR and PRHR models in (\ref{HR-cdf}) and (\ref{IHR-cdf}) are well-known in lifetime experiments.
 The lifetime distribution of a system and its components are of interest in reliability testing. Statistical analysis of the lifetime of a  system or its components is an important topic in many research areas such as engineering, marketing and biomedical sciences. See, for example,  \citet{lawless2011statistical}, \citet{navarro2008application}, \citet{gupta2007proportional}, \citet{gupta1998modeling} and \citet{helsen1993analyzing}.
The PHR and PRHR models include several well-known lifetime distributions.  
 In the sequel, we are interested in estimating $\gamma(\bt)$ with specific choices of $F_0$. Some examples of  hazard and reverse hazard rate models  are presented in Table \ref{examples}.
\begin{table}[t]
\caption{Some examples of hazard and reverse hazard rate models}
\label{examples}
\centering{
\scalebox{1}{
\begin{tabular}{ cccccc}
\hline\hline
 Distribution &  $F(x; \bt)$ &  Domains & $F_0(x)$ & $\gamma(\bt)$ & Rate Function \\
\hline
\hline
 $X\sim\text{Exp}(\lambda)$ & $1-e^{-x/\lambda}$ & $x\in[0, \infty)$, $\lambda\in(0, \infty)$ & $1-e^{-x}$ & $\lambda$ & $1/\lambda$ \\
$X\sim\text{Par}(\nu=1, \lambda)$ &  $1-x^{-1/\lambda}$ & $x\in [1, \infty)$, $\lambda \in(0, \infty)$ & $1-x^{-1}$ & $\lambda$ & $-1/\lambda x$ \\
$X\sim\text{Bet}(1/\eta, \nu=1)$ & $x^{1/\eta}$ & $ x\in [0, 1]$, $\eta\in(0, \infty)$ & $x$ & $1/\eta$ & $1/\eta x$ \\
$X\sim\text{GExp}(1/\eta, \nu=1)$ & $(1-e^{-x})^{1/\eta}$ & $x\in[0, \infty)$, $\eta\in(0, \infty)$ & $1-e^{-x}$ & $\eta$ & $e^{-x}/\eta (1-e^{-x})$\\
\hline\hline
\end{tabular}
}
}
\end{table}

The remainder of the paper is organized as follows. In Section \ref{MLE},  we investigate the RNS complete-data and provide the PDF and CDF of an RNS observation in the form of complete-data. We also derive the ML estimators of $\gamma(\bt)$ in the PHR and PRHR model when the triplet $(y_i, k_i, z_i)$, $i=1, \ldots, m$, is available.  In Section \ref{MLE-incom}, we present the ML estimation for the parameters based on incomplete RNS data. We provide the PDF and CDF of observations in each RNS incomplete-data  and use the EM algorithm  to obtain the ML estimators of the parameters of interest. In Section \ref{MME}, we derive  the RNS-based MM estimation in the PHR and PRHR models; when the RNS data are from either complete- or incomplete-data scenarios.  In Section \ref{numeric}, we illustrate the numerical  results in detail and compare the performance of the RNS-based estimators  with the  corresponding SRS estimators for the Exponential and Beta distributions. Moreover, in Section \ref{numeric}, the performance of RNS-based ML estimators in a more complicated situation is investigated using a real life dataset on fish mercury contamination  measurement.


\section{ML Estimation in RNS Complete-Data}\label{MLE}

Let $X_1, \ldots, X_m$ be a simple random sample of size $m$ from a continuous distribution with CDF $F(x; \bt)$ and PDF  $f(x; \bt)$. 
If it exists, the SRS-based ML estimator of $\bt$, denoted by  $\hat\bt_{s}$, satisfies the ML equations
\begin{align}\label{SRS-MLE}
\sum_{i=1}^m \frac{f'(X_i; \bt)}{f(X_i; \bt)}=0,
\end{align}
where 
$f'(X; {\bt})=\partial f(X; \bt)/\partial\bt.$
 Let $Y_1, \ldots, Y_m$ be a randomized nomination sample  of size $m$ from $F$. The forms of the CDFs and PDFs of $Y_i$'s, in addition to the RNS-based ML equations, are determined by the available RNS data. 
In this section, we use the RNS complete-data to derive the ML estimator of $\bt$. In the RNS complete-data case, the triplets $(y_i, k_i, z_i)$, $i=1, 2, \ldots,m$,  are known. In other words, one knows that, for $i=1, \ldots, m$, the observed value $Y_i=y_i$ is from a set of size $K_i=k_i$ with the value $Z_i=z_i$ and the rank $r_i=z_ik_i+(1-z_i)$ in the $i$-th set, where $k_i$ and $z_i$ are both known. An RNS observation $Y$ given $K=k$ and $Z=z$, where $k\in\{1, 2, \ldots\}$ and $z\in\{0, 1\}$, has the  CDF $G_c(y | k, z; \bt)$ and the PDF $g_c(y | k, z; \bt)$ as follows
\begin{align*}
G_c(y | k, z; \bt)=\left\{F^k(y; {\bt})\right\}^z\left\{1-\bar F^k(y; {\bt})\right\}^{1-z},
\end{align*}
and
\begin{align*}
 g_c(y | k, z; \bt)=k f(y; {\bt})\left\{F^{k-1}(y; {\bt})\right\}^z\left\{\bar F^{k-1}(y; {\bt})\right\}^{1-z}.
\end{align*}
The log likelihood function based on the RNS complete-data is given by
\begin{align}\label{logLcomplete}
L^{RNS}_c(\bt)
=  \sum_{i=1}^m \left\{\log k_i+ \log f(y_i; {\bt}) +  z_i(k_i-1)\log F(y_i; {\bt})+ 
(1-z_i)(k_i-1)\log\bar F(y_i; {\bt})\right\}.
\end{align}

Upon differentiation of (\ref{logLcomplete}) with respect to ${\bt}=(\theta_1, \ldots, \theta_p)^T$ and equating the result to zero, the (complete) ML estimator of ${\bt}$, denoted by $\hat{\bt}_c$, is obtained from
\begin{align}\label{dLc}
\frac{\partial L_c^{RNS}(\bt)}{\partial{\bt}}&=\sum_{i=1}^m \left\{\frac{f'(y_i; {\bt})}{f(y_i; {\bt})}+ z_i (k_i-1)\frac{F'(y_i; {\bt})}{F(y_i; {\bt})}
+(1-z_i)(k_i-1)\frac{\bar F'(y_i; {\bt})}{\bar F(y_i; {\bt})}\right\}=0,
\end{align}
where $F'(y; {\bt})=\partial F(y; {\bt})/\partial{\bt}$.
Since both $F'(y; {\bt})/F(y; {\bt})$ and $\bar F'(y; {\bt})/\bar F(y; {\bt})$ are involved in the RNS  likelihood,  equation  (\ref{dLc}) is more complicated  to solve for ${\bt}$ than  (\ref{SRS-MLE}), and for most distributions there is no closed form expressions for the ML estimators. Following the idea proposed by \citet{mehrotra1974unbiased}, we consider the modified ML (MML) estimators of parameters.
Depending on the underlying distribution, one may need to replace one or both of the second and third terms on the left-hand side of (\ref{dLc}) by their corresponding expected values. The obtained MML estimator of $\bt$ is denoted by $\hat{\bt}_m$.


\subsection{ML Estimation in the PHR Model}\label{ML-PHR}

Let $X_i$, $i=1, \ldots, m$, be a sequence of IID random variables from the family of PHR models in (\ref{HR-cdf}). The SRS-based ML estimator of $\gamma({\bt})$, denoted by $\widehat{\gamma_s({\bt})}$, can be expressed as
\begin{align}\label{SRS-HR-ML}
\widehat{\gamma_{s}({\bt})}=-\frac 1m\sum_{i=1}^m\log\bar F_0(X_i),
\end{align}
which is an unbiased estimator of $\gamma(\bt)$ with variance $\Var[\widehat{\gamma_s({\bt})}]=\gamma^2({\bt})/m$. 
Under the model  (\ref{HR-cdf}), the PDF of a random variable $Y_i$, $i=1, \ldots, m$, from the RNS complete-data  is 
\begin{align}\label{HR-pdf-y}
g_c(y_i | k_i, z_i; \bt)=k_i \frac{1}{\gamma({\bt})}\frac{f_0(y_i)}{\bar F_0(y_i)}\left([\bar F_0(y_i)]^{\frac{1}{\gamma({\bt})}}\right)^{\alpha_i-1}\left(1-[\bar F_0(y_i)]^{\frac{1}{\gamma({\bt})}}\right)^{\beta_i},
\end{align}
where $\alpha_i=(1-z_i)(k_i-1)+2$ and $\beta_i=z_i(k_i-1)$.
The RNS-based complete ML estimator of $\gamma(\bt)$, denoted by  $\hat \gamma_c({\bt})$, is obtained by solving the equation 
\begin{align}\label{HR-complete}
\sum_{i=1}^m \left\{\gamma({\bt})+\log \bar F_0(Y_i)-\beta_i\frac{[\bar F_0(Y_i)]^{\frac{1}{\gamma({\bt})}}}{1-[\bar F_0(Y_i)]^{\frac{1}{\gamma({\bt})}}}\log \bar F_0(Y_i)+(\alpha_i-2)\log \bar F_0(Y_i)\right\}=0.
\end{align}
Note that if $z_i=0$, for all $i=1, \ldots, m$, one can easily obtain a complete RNS-based ML estimator of $\gamma({\bt})$ as  described in the following lemma. 
\begin{lemma}
Let $X_1, \ldots, X_m$ be  IID random variables from (\ref{HR-cdf}), and suppose $(Y_1, k_1, z_1$=$0)$, \ldots, $(Y_m, k_m, z_m$=$0)$ is the corresponding MINS  sample of size $m$.  The complete ML estimator of $\gamma(\bt)$  is given by
\begin{align}\label{HR-min}
\widehat{\gamma_c({\bt})}=-\frac 1m \sum_{i=1}^mk_i \log \bar F_0(Y_i),
\end{align} 
which is an unbiased estimator of $\gamma(\bt)$ with the variance equal to its SRS-based counterpart, i.e.,
\begin{align*}
\Var[\widehat{\gamma_c({\bt})}]= \frac{\gamma^2({\bt})}{m}.
\end{align*}
\begin{proof}
From  (\ref{HR-complete}), by replacing $z_i=0$ for $i=1, \ldots, m$, the complete ML estimate for $\gamma(\bt)$ in (\ref{HR-min}) is obtained. Noting that in the PHR model we have 
\begin{align}\label{exp-HPR}
\E[\log\bar F_0(Y_i)]&=k_i\text{B}(\alpha_i-1, \beta_i+1)\E(\log W_i)\gamma(\bt),
\end{align}
and
\begin{align}\label{exp^2-HPR}
\E[\log^2\bar F_0(Y_i)]=k_i\text{B}(\alpha_i-1, \beta_i+1)\E(\log^2 W_i)\gamma^2(\bt),
\end{align}
where $W_i\sim\text{Beta}(\alpha_i-1, \beta_i+1)$,  $\text{B}(\alpha, \beta)=\Gamma(\alpha)\Gamma(\beta)/\Gamma(\alpha+\beta)$ and
\begin{align*}
\E[\log W_i]=\mathfrak{D}(\alpha_i-1)-\mathfrak{D}(\alpha_i+\beta_i),
\end{align*}
and
\begin{align*}
\Var[\log W_i]=\mathfrak{D}'(\alpha_i-1)-\mathfrak{D}'(\alpha_i+\beta_i).
\end{align*}
The function $\mathfrak{D}(\cdot)$ is the Digamma function and, for $n\in\mathbb{N}$, $\mathfrak{D}(n)=\sum_{j=1}^{n-1}\frac1j-\tau$, where $\tau\approx 0.57722$ is the Euler-Mascheroni constant. 
The function $\mathfrak{D}'(\cdot)$ is  the Trigamma function, which is defined as
\begin{align*}
\mathfrak{D}'(n)=-\int_0^1\frac{x^{n-1}\log x}{1-x}dx, \quad \text{for}\quad n\in\mathbb{N},
\end{align*}
where $\mathfrak{D}'(n+1)=\mathfrak{D}'(n)-1/n^2$ gives the result. Now, the expected value and the variance of $\widehat{\gamma_c(\bt)}$ follow immediately.
\end{proof}
\end{lemma}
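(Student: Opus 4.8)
The plan is to first reduce the general complete-data ML equation (\ref{HR-complete}) to closed form in the MINS case, and then to compute the first two moments of the resulting statistic by transforming each $\log\bar F_0(Y_i)$ into the logarithm of a Beta random variable.

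First I would substitute $z_i=0$ for all $i$ into (\ref{HR-complete}). With $z_i=0$ we have $\beta_i=z_i(k_i-1)=0$ and $\alpha_i=(1-z_i)(k_i-1)+2=k_i+1$, so the $\beta_i$-term vanishes and the $i$-th summand collapses to $\gamma(\bt)+\log\bar F_0(Y_i)+(k_i-1)\log\bar F_0(Y_i)=\gamma(\bt)+k_i\log\bar F_0(Y_i)$. Because this equation is linear in $\gamma(\bt)$, solving $\sum_i[\gamma(\bt)+k_i\log\bar F_0(Y_i)]=0$ gives (\ref{HR-min}) at once, with no iterative or EM step required.

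Next I would identify the law of the summands. Setting $W_i=[\bar F_0(Y_i)]^{1/\gamma(\bt)}=\bar F(Y_i;\bt)$ and performing the change of variable $w=[\bar F_0(y)]^{1/\gamma(\bt)}$ in the MINS density (\ref{HR-pdf-y}) (i.e.\ with $\alpha_i=k_i+1$ and $\beta_i=0$), the factor $\tfrac{1}{\gamma(\bt)}\tfrac{f_0(y)}{\bar F_0(y)}\,dy$ becomes $-dw/w$ and the density of $W_i$ reduces to $k_iw^{k_i-1}$ on $(0,1)$; that is, $W_i\sim\text{Beta}(k_i,1)$. Since $\log\bar F_0(Y_i)=\gamma(\bt)\log W_i$, the moments of the statistic are governed entirely by the log-moments of a $\text{Beta}(k_i,1)$ variable, namely $\E[\log W_i]=\mathfrak{D}(k_i)-\mathfrak{D}(k_i+1)$ and $\Var[\log W_i]=\mathfrak{D}'(k_i)-\mathfrak{D}'(k_i+1)$.

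The rest is routine digamma/trigamma bookkeeping: the recurrences $\mathfrak{D}(n+1)=\mathfrak{D}(n)+1/n$ and $\mathfrak{D}'(n+1)=\mathfrak{D}'(n)-1/n^2$ give $\E[\log W_i]=-1/k_i$ and $\Var[\log W_i]=1/k_i^2$. I would then assemble the two claims by linearity and independence: $\E[\widehat{\gamma_c(\bt)}]=-\tfrac1m\sum_i k_i\,\E[\log\bar F_0(Y_i)]=-\tfrac1m\sum_i k_i\cdot(-\gamma(\bt)/k_i)=\gamma(\bt)$, and $\Var[\widehat{\gamma_c(\bt)}]=\tfrac1{m^2}\sum_i k_i^2\,\Var[\log\bar F_0(Y_i)]=\tfrac1{m^2}\sum_i k_i^2\cdot(\gamma^2(\bt)/k_i^2)=\gamma^2(\bt)/m$. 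The crux---and the only step needing genuine care---is the $\text{Beta}(k_i,1)$ identification via the change of variable, since it is precisely the $k_i$-weighting in that law that cancels the $k_i$ (respectively $k_i^2$) multipliers and makes both the bias vanish and the variance coincide with the SRS value $\gamma^2(\bt)/m$ irrespective of the set sizes.
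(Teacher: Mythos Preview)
Your proof is correct and follows essentially the same route as the paper: reduce (\ref{HR-complete}) with $z_i=0$ to obtain the closed-form estimator, recognize that $[\bar F_0(Y_i)]^{1/\gamma(\bt)}$ is $\mathrm{Beta}(\alpha_i-1,\beta_i+1)=\mathrm{Beta}(k_i,1)$ in the MINS case, and finish with the digamma/trigamma recurrences. The only cosmetic difference is that the paper records the moment identities (\ref{exp-HPR})--(\ref{exp^2-HPR}) in their general $(\alpha_i,\beta_i)$ form (so they can be reused for the MML estimator in Theorem~\ref{HR-com-MLE}), whereas you specialize immediately to $\mathrm{Beta}(k_i,1)$ and carry the arithmetic through explicitly.
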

In the general case, to construct the MML estimator of $\gamma({\bt})$ in PHR models, one needs to replace the second term in (\ref{HR-complete})  by its expected value for any $z_i\ne 0$ and $k_i\ne1$. Note that for $z_i=0$ or $k_i=1$ the second term in (\ref{HR-complete}) equals zero.
The RNS-based MML estimator of $\gamma({\bt})$, denoted by $\widehat{\gamma_m({\bt})}$, and the corresponding expected value and variance are given in the following theorem. 
\begin{theorem}\label{HR-com-MLE}
Let $X_1, \ldots, X_m$ be  IID random variables from (\ref{HR-cdf}), and suppose $(Y_1, k_1, z_1)$, \ldots, $(Y_m, k_m, z_m)$ is the corresponding RNS sample of size $m$ with at least one $z_i=1$.  Further, let $U_i\sim\text{Beta}(\alpha_i, \beta_i)$ and $W_i\sim\text{Beta}(\alpha_i-1, \beta_i+1)$,
where $\alpha_i=(1-z_i)(k_i-1)+2$, $\beta_i=z_i(k_i-1)$ and
$$
\mathcal{E}_i=\frac{-1}{\gamma({\bt})}\E\left(\frac{[\bar F_0(Y_i)]^{\frac{1}{\gamma({\bt})}}}{1-[\bar F_0(Y_i)]^{\frac{1}{\gamma({\bt})}}}\log\bar F_0(Y_i)\right)= 
-k_iB(\alpha_i, \beta_i)\E(\log U_i),\quad i=1, \ldots, m.
$$
\begin{enumerate}[(a)]
\item The MML estimator of  $\gamma({\bt})$ based on  RNS complete-data of size $m$ is given by
\begin{align}\label{HR-MMLE}
\widehat{\gamma_m({\bt})}=\frac{-\sum_{i=1}^m(\alpha_i-1)\log\bar F_0(Y_i)}{m+\sum_{i=1}^m\beta_i\mathcal{E}_i},
\end{align}
 
\item The expected value and the variance  of $\hat\gamma_m({\bt})$ are respectively
\begin{align}\label{com-exp}
\E[\widehat{\gamma_m({\bt})}]=\frac{-\sum_{i=1}^m(\alpha_i-1)\E[\log \bar F_0(Y_i)]}{m+\sum_{i=1}^m\beta_i\mathcal{E}_i}\gamma(\bt),
\end{align}
and
\begin{align}\label{com-var}
\Var[\widehat{\gamma_m({\bt})}]=\frac{\sum_{i=1}^m(\alpha_i-1)^2\Var(\log\bar F_0(Y_i))}{(m+\sum_{i=1}^m\beta_i\mathcal{E}_i)^2}\gamma^2(\bt),
\end{align}
where $\E[\log \bar F_0(Y_i)]$ and $\Var[\log\bar F_0(Y_i)]$ are obtained from (\ref{exp-HPR}) and (\ref{exp^2-HPR}).
\end{enumerate}
\end{theorem}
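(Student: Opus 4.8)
The plan is to begin from the complete-data likelihood equation (\ref{HR-complete}) and carry out the Mehrotra-type modification described immediately before the theorem, reading off the three assertions in turn. For part (a), I would apply the MML prescription by replacing the single intractable summand, the term $-\beta_i\,\frac{[\bar F_0(Y_i)]^{1/\gamma(\bt)}}{1-[\bar F_0(Y_i)]^{1/\gamma(\bt)}}\log\bar F_0(Y_i)$, with its expectation. By the definition of $\mathcal{E}_i$ this expectation equals $\beta_i\,\gamma(\bt)\,\mathcal{E}_i$, the prefactor $-1/\gamma(\bt)$ in $\mathcal{E}_i$ accounting for both the sign and the explicit $\gamma(\bt)$. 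After this substitution the equation is linear in $\gamma(\bt)$: the constant summands collect into $\gamma(\bt)\,(m+\sum_{i}\beta_i\mathcal{E}_i)$ and the logarithmic summands into $\sum_i(\alpha_i-1)\log\bar F_0(Y_i)$, using $1+(\alpha_i-2)=\alpha_i-1$, and solving yields exactly (\ref{HR-MMLE}). It is worth recording that $\mathcal{E}_i$ is free of both the data and $\gamma(\bt)$, and that $\E(\log U_i)<0$ forces $\mathcal{E}_i\ge 0$, so the denominator is a strictly positive constant at least $m$; this makes the estimator well defined, while the hypothesis that at least one $z_i=1$ simply guarantees the modification is genuinely active.

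The substantive step is the closed form for $\mathcal{E}_i$, which I expect to be the main obstacle. I would change variables to $T_i=\bar F(Y_i;\bt)=[\bar F_0(Y_i)]^{1/\gamma(\bt)}$. Reading the density (\ref{HR-pdf-y}) and using $\frac{1}{\gamma(\bt)}\frac{f_0}{\bar F_0}[\bar F_0]^{1/\gamma(\bt)}\,dy=-\,dt$, one sees that $T_i$ has density proportional to $t^{\alpha_i-2}(1-t)^{\beta_i}$ on $(0,1)$, that is $T_i\sim\text{Beta}(\alpha_i-1,\beta_i+1)$ (the normalization $k_iB(\alpha_i-1,\beta_i+1)=1$ confirms this), so $T_i$ is distributed as the $W_i$ of the preceding lemma. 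Writing $\log\bar F_0(Y_i)=\gamma(\bt)\log T_i$, the expectation appearing in $\mathcal{E}_i$ becomes $\gamma(\bt)\,k_i\int_0^1 \frac{t}{1-t}\,t^{\alpha_i-2}(1-t)^{\beta_i}\log t\,dt=\gamma(\bt)\,k_i\int_0^1 t^{\alpha_i-1}(1-t)^{\beta_i-1}\log t\,dt$. The key observation is that the factor $t/(1-t)$ shifts the Beta parameters from $(\alpha_i-1,\beta_i+1)$ to $(\alpha_i,\beta_i)$, so the integral equals $B(\alpha_i,\beta_i)\,\E(\log U_i)$ with $U_i\sim\text{Beta}(\alpha_i,\beta_i)$; multiplying by the prefactor $-1/\gamma(\bt)$ cancels $\gamma(\bt)$ and delivers $\mathcal{E}_i=-k_iB(\alpha_i,\beta_i)\E(\log U_i)$, as claimed.

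For part (b) both formulas follow from (\ref{HR-MMLE}) once one notes that its denominator is a deterministic constant. For the mean, linearity of expectation moves $\E$ onto the numerator; substituting the first moment (\ref{exp-HPR}) for each $\E[\log\bar F_0(Y_i)]$ and factoring the common $\gamma(\bt)$ out front produces (\ref{com-exp}). For the variance, I would use the mutual independence of $Y_1,\dots,Y_m$ (the $m$ sets are drawn independently), so the variance of the numerator is the sum of the per-summand variances, the squared constant denominator factors out, and inserting $\Var[\log\bar F_0(Y_i)]=\gamma^2(\bt)\,\Var(\log W_i)$, obtained by combining (\ref{exp-HPR}) and (\ref{exp^2-HPR}), gives (\ref{com-var}). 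Both of these last computations are routine given the constant-denominator observation and the moment identities already established for the MINS lemma.
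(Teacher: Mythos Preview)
Your proposal is correct and follows essentially the same route as the paper's own (very terse) proof: derive $\mathcal{E}_i$ from the density (\ref{HR-pdf-y}), substitute it for the intractable term in (\ref{HR-complete}), solve the resulting linear equation, and read off the mean and variance from the constant denominator. Your change of variables $T_i=[\bar F_0(Y_i)]^{1/\gamma(\bt)}\sim\text{Beta}(\alpha_i-1,\beta_i+1)$ and the observation that the weight $t/(1-t)$ shifts the Beta parameters to $(\alpha_i,\beta_i)$ are exactly the computation the paper hides behind the phrase ``is derived using the PDF of $Y_i$,'' so there is no methodological difference.
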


\begin{proof}
For part (a), the value $\mathcal{E}_i$ is derived using the PDF of $Y_i$ in (\ref{HR-pdf-y}). Substituting $\mathcal{E}_i$ for the second term in (\ref{HR-complete}) results in (\ref{HR-MMLE}).  Parts (b) is trivial.
\end{proof}
\begin{remark}
Considering (\ref{com-exp}), 
\begin{align*}
\widetilde{\gamma_m({\bt})}=\left[\frac{m+\sum_{i=1}^m\beta_i\mathcal{E}_i}{-\sum_{i=1}^m(\alpha_i-1)k_iB(\alpha_i-1, \beta_i+1)\E(\log W_i)}\right]\widehat{\gamma_m({\bt})}
\end{align*}
is an unbiased estimator of $\gamma(\bt)$.
\end{remark}
\begin{corollary}
The MML estimator of $\gamma(\bt)$ based on a MANS sample of size $m$ from ($\ref{HR-cdf}$) is given by
\begin{align*}
\widehat{\gamma_m({\bt})}=-\frac{1}{\sum_{i=1}^m\sum_{j=1}^{k_i}\frac1j}\sum_{i=1}^m\log\bar F_0(Y_i),
\end{align*}
which is an  unbiased estimator of  $\gamma(\bt)$ with  variance
\begin{align}\label{var-max-HR}
\Var[\widehat{\gamma_m({\bt})}]=\frac{\sum_{i=1}^m\sum_{j=1}^{k_i}\frac{1}{j^2}}{(\sum_{i=1}^m\sum_{j=1}^{k_i}\frac1j)^2}\gamma^2(\bt),
\end{align}
which is always smaller than its SRS counterpart. 
\end{corollary}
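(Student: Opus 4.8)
The plan is to obtain this Corollary as a specialization of Theorem \ref{HR-com-MLE} to the MANS design, where $\zeta=1$ forces $z_i=1$ for every $i=1,\ldots,m$. Setting $z_i=1$ in the theorem's indices gives $\alpha_i=(1-z_i)(k_i-1)+2=2$ and $\beta_i=z_i(k_i-1)=k_i-1$, so that $U_i\sim\text{Beta}(2,k_i-1)$ and $W_i\sim\text{Beta}(1,k_i)$. Since $\alpha_i-1=1$, the numerator of (\ref{HR-MMLE}) instantly becomes $-\sum_{i=1}^m\log\bar F_0(Y_i)$, so the only real work in part one is to show the denominator $m+\sum_{i=1}^m\beta_i\mathcal{E}_i$ collapses to $\sum_{i=1}^m\sum_{j=1}^{k_i}\frac1j$.

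To that end I would evaluate $\mathcal{E}_i=-k_iB(\alpha_i,\beta_i)\E(\log U_i)$ explicitly. Using $B(2,k_i-1)=1/[k_i(k_i-1)]$ together with the Beta-log identity $\E(\log U_i)=\mathfrak{D}(2)-\mathfrak{D}(k_i+1)$ and $\mathfrak{D}(n)=\sum_{j=1}^{n-1}\frac1j-\tau$, one finds $\E(\log U_i)=1-\sum_{j=1}^{k_i}\frac1j=-\sum_{j=2}^{k_i}\frac1j$, and after simplification $\beta_i\mathcal{E}_i=\sum_{j=2}^{k_i}\frac1j$. Writing $m=\sum_{i=1}^m 1$ to supply the missing $j=1$ terms, the denominator telescopes to $\sum_{i=1}^m\bigl(1+\sum_{j=2}^{k_i}\frac1j\bigr)=\sum_{i=1}^m\sum_{j=1}^{k_i}\frac1j$, which establishes the stated form of $\widehat{\gamma_m(\bt)}$.

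For the moments I would specialize (\ref{exp-HPR}) and (\ref{exp^2-HPR}) with $W_i\sim\text{Beta}(1,k_i)$. Here $k_iB(\alpha_i-1,\beta_i+1)=k_iB(1,k_i)=1$ and $\E(\log W_i)=\mathfrak{D}(1)-\mathfrak{D}(k_i+1)=-\sum_{j=1}^{k_i}\frac1j$, so $\E[\log\bar F_0(Y_i)]=-\gamma(\bt)\sum_{j=1}^{k_i}\frac1j$; summing over $i$ and dividing by the constant denominator yields $\E[\widehat{\gamma_m(\bt)}]=\gamma(\bt)$, i.e.\ unbiasedness. For the variance, independence of the $Y_i$ reduces matters to $\Var[\log\bar F_0(Y_i)]=\gamma^2(\bt)\Var(\log W_i)=\gamma^2(\bt)[\mathfrak{D}'(1)-\mathfrak{D}'(k_i+1)]$, and the trigamma recursion $\mathfrak{D}'(n+1)=\mathfrak{D}'(n)-1/n^2$ telescopes this to $\gamma^2(\bt)\sum_{j=1}^{k_i}\frac1{j^2}$; dividing by the square of the denominator gives (\ref{var-max-HR}).

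The final and conceptually most interesting step is the comparison with the SRS variance $\gamma^2(\bt)/m$. Setting $a_i=\sum_{j=1}^{k_i}\frac1j$ and $b_i=\sum_{j=1}^{k_i}\frac1{j^2}$, the claim is $\sum_ib_i/(\sum_ia_i)^2\le1/m$, i.e.\ $m\sum_ib_i\le(\sum_ia_i)^2$. I would prove it by the two-step bound $m\sum_ib_i\le m\sum_ia_i\le(\sum_ia_i)^2$: the first inequality is the termwise estimate $b_i\le a_i$ (since $1/j^2\le1/j$), and the second follows from $\sum_ia_i\ge m$, which holds because each $a_i\ge1$. The inequality is strict as soon as some $k_i\ge2$, because then $b_i<a_i$, so the MANS estimator is strictly more efficient than its SRS counterpart for any genuine MANS design, with equality recovered exactly in the degenerate case $k_1=\cdots=k_m=1$. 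I expect the only subtlety to lie in the bookkeeping of the digamma and trigamma telescopings; the efficiency comparison itself is this clean two-line chain of inequalities.
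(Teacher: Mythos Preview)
Your proposal is correct and follows exactly the route the paper intends: the corollary is stated without proof, as an immediate specialization of Theorem~\ref{HR-com-MLE} with $z_i=1$, $\alpha_i=2$, $\beta_i=k_i-1$, and your digamma/trigamma telescoping matches the identities supplied in the proof of Lemma~2.1. The two-step inequality $m\sum_i b_i\le m\sum_i a_i\le(\sum_i a_i)^2$ for the variance comparison is a clean argument that the paper itself omits entirely, so on that point you are supplying more than the paper does.
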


\begin{example}\label{exp-exp}
Let $X_i, i=1, \ldots, m$, be a SRS sample of size $m$ from an exponential distribution with parameter $\lambda$.
 Further, let $(Y_i, k_i, z_i)$ be an RNS sample of size $m$ obtained from the same exponential distribution. Noting that 
\begin{align*}
F(x; \lambda)=1-e^{-x/\lambda} \text{and}\quad f(x; \lambda)=\frac{1}{\lambda}e^{-x/\lambda},
\end{align*}
the SRS-based ML estimator of $\lambda$ is $\hat\lambda_s=\bar X$ with the expected value $\E[\hat\lambda_s]=\lambda$ and the variance $\Var[\hat\lambda_s]=\lambda^2/m$.  
 Assuming $W_i\sim\text{Beta}(\alpha_i-1, \beta_i+1)$, where $\alpha_i=(1-z_i)(k_i-1)+2$ and $\beta_i=z_i(k_i-1)$, we have  
$
\E[Y_i]=-\lambda k_iB(\alpha_i-1, \beta_i+1)\E(\log W_i)$ and $\E(Y_i^2)=\lambda^2k_iB(\alpha_i-1, \beta_i+1)\E(\log^2 W_i).
$
The RNS-based complete ML estimator of $\lambda$, denoted by $\hat\lambda_c$, is obtained from the equation 
\begin{align*}
\sum_{i=1}^m \left\{\lambda-(\alpha_i-1)Y_i+\beta_i\frac{e^{-Y_i/\lambda}}{1-e^{-Y_i/\lambda}}Y_i\right\}=0.
\end{align*}
The MINS complete-data ML estimator of $\lambda$ is given by
\begin{align*}
\hat\lambda_c=\frac{\sum_{i=1}^m k_iY_i}{m},
\end{align*}
which is unbiased for $\lambda$ with the variance  $\Var[\hat\lambda_c]=\lambda^2/m$. 
Also, the MML estimator of $\lambda$ based on  MANS complete-data is
\begin{align}\label{exp-mans}
\hat\lambda_m=\frac{\sum_{i=1}^m Y_i}{\sum_{i=1}^m\sum_{j=1}^{k_i}\frac1j}.
\end{align}
Noting that 
$
\E[Y_i]= \lambda\sum_{j=1}^{k_i}\frac{1}{j}
$
and
$
\Var[Y_i]=\lambda^2\sum_{j=1}^{k_i}\frac{1}{j^2},
$
 the MML estimator $\hat\lambda_m$  in (\ref{exp-mans}) is unbiased with  variance
\begin{align*}
\Var[\hat\lambda_m]=\frac{\sum_{i=1}^m\sum_{j=1}^{k_i}\frac{1}{j^2}}{(\sum_{i=1}^m\sum_{j=1}^{k_i}\frac1j)^2} \lambda^2.
\end{align*} 
\end{example}
\begin{remark}
It can be shown that the above mentioned results hold with  minor modifications for  the family of PRHR model in (\ref{IHR-cdf}). 
For example,
the RNS-based complete ML estimator of the parameter $\gamma({\bt})$, is obtained from
\begin{align}\label{IHR-complete}
\sum_{i=1}^m\left\{\gamma({\bt})+\log F_0(Y_i)+\beta_i\log F_0(Y_i)-(\alpha_i-2)\left(\frac{[F_0(Y_i)]^{\frac{1}{\gamma({\bt})}}}{1-[F_0(Y_i)]^{\frac{1}{\gamma({\bt})}}}(\log F_0(Y_i))\right)\right\}=0.
\end{align}
Also, the RNS-based complete ML estimator of $\gamma(\bt)$ in (\ref{IHR-cdf}) using  MANS sample is given by
\begin{align*}
\widehat{\gamma_c({\bt})}=-\frac{1}{m}\sum_{i=1}^mk_i \log F_0(Y_i),
\end{align*} 
which is an unbiased estimator of $\gamma(\bt)$ with  variance equal to its SRS-based counterpart, i.e., $\lambda^2/m$.
Note that the ML estimator of $\gamma(\bt)$ in  PRHR models based on SRS is given by
$$
\widehat{\gamma_s(\bt)}=-\frac{1}{m}\sum_{i=1}^m \log F_0(X_i).
$$
The MML estimator of $\gamma(\bt)$ in the case where at least one $z_i=0$ is given by
\begin{align*}
\widehat{\gamma_m({\bt})}=\frac{-\sum_{i=1}^m(\beta_i+1)\log F_0(Y_i)}{m+\sum_{i=1}^m(\alpha_i-2)\mathcal{E}^*_i},
\end{align*}
where
$\mathcal{E}^*_i=  -k_iB(\beta_i+2, \alpha_i-2)\E(\log U^*_i),$
and $U_i^*\sim\text{Beta}(\beta_i+2, \alpha_i-2)$.
The expected value  and the variance of $\widehat{\gamma_m({\bt})}$ are, respectively, 
\begin{align}\label{E-PRHR}
\E[\widehat{\gamma(\bt)}]=\frac{-\sum_{i=1}^m(\beta_i+1)k_i\text{B}(\beta_i+1, \alpha_i-1)\E[\log E_i^*]}{m+\sum_{i=1}^m(\alpha_i-2)\mathcal{E}_i^*}\gamma(\bt)
\end{align}
and
\begin{align}\label{V-PRHR}
\Var[\widehat{\gamma(\bt)}]=\frac{\sum_{i=1}^m(\beta_i+1)^2\Var[\log\bar F_0(Y_i)]}{(m+\sum_{i=1}^m(\alpha_i-2)\mathcal{E}_i^*)^2}\gamma^2(\bt),
\end{align}
where, assuming $W^*_i\sim\text{Beta}(\beta_i+1, \alpha_i-1)$, we have
\begin{align}\label{exp-PRH}
\E[\log F_0(Y_i)]&=\gamma({\bt})k_iB(\beta_i+1, \alpha_i-1)\E(\log W^*_i), 
\end{align}
and
\begin{align}\label{exp^2-PRH}
\E[\log^2 F_0(Y_i)]&=\gamma^2({\bt})k_iB(\beta_i+1, \alpha_i-1)\E(\log^2 W^*_i).
\end{align}

In  the MINS design,  the MML estimator of $\gamma({\bt})$ is given by
\begin{align*}
\widehat{\gamma_m({\bt})}=\frac{-\sum_{i=1}^m\log F_0(Y_i)}{\sum_{i=1}^m\sum_{j=1}^{k_i}\frac1j}.
\end{align*}
One can easily show that $\widehat{\gamma_m(\bt)}$ is an unbiased estimator of $\gamma(\bt)$ with  the variance given by  (\ref{var-max-HR}).
\end{remark}
 
 \begin{example}\label{beta-beta}
 Let $X_1, \ldots, X_m$ be an SRS sample of size $m$ from a Beta($\frac1\eta, 1$) distribution with  the corresponding  CDF and PDF as follow
 \begin{align}\label{beta-cdf}
 F(x; {\bt})=x^\frac1\eta \quad\text{and}\quad f(x; {\eta})=\frac1\eta x^{\frac1\eta-1}.
 \end{align}
 Further, let $Y_i$, $i=1, \ldots, m$, denote an RNS sample from  (\ref{beta-cdf}). 
 The RNS-based complete MLE, $\hat\eta_c$, is obtained from the equation
 \begin{align*}
 \sum_{i=1}^m\left\{\eta+\log y_i+z_i(k_i-1)\log Y_i+(1-z_i)(k_i-1)\frac{Y_i^{\frac{1}{\eta}}}{1-Y_i^{\frac{1}{\eta}}}(-\log Y_i) \right\}=0.
 \end{align*}
 For a MANS sample, the unbiased RNS-based ML estimate for $\eta$ using the complete-data is given by
 \begin{align*}
 \hat\eta_c=-\frac{\sum_{i=1}^m k_i\log Y_i}{m},
 \end{align*}
 with $\Var(\hat\eta_c)=\eta^2/m$.
 Also, based on the MINS design, the parameters $\alpha_i$ and $\beta_i$ are, respectively, 1 and $k_i$. It is seen that the unbiased  MML estimator of $\eta$ based on RNS complete-data is given by
 \begin{align*}
 \hat\eta_m=\frac{-\sum_{i=1}^m\log Y_i}{\sum_{i=1}^m\sum_{j=1}^{k_i}\frac1j},
 \end{align*}
 with the variance provided in (\ref{var-max-HR}).
 \end{example}
 
 \section{ML Estimation in RNS Incomplete-Data}\label{MLE-incom}
 
 In RNS complete-data the triplet $(y_i, k_i, z_i)$ are all assumed to be observed. In practice, for $i=1, \ldots, m$, one of $k_i$ or $z_i$, or both, may be unknown. In this section, we investigate the CDF  and PDF of the RNS random variable $Y_i$, the ML equations and corresponding EM algorithms associated with each type of RNS incomplete-data.
 
\begin{enumerate}[(a)] 
\item {\bf Type I RNS incomplete-data:} 
Here,  we assume that $(y_i, k_i)$, for $i=1, \ldots, m$, are known. In other words, the $i$-th observed unit is from a set of size $k_i$. This unit may be the maximum or minimum with probability $\zeta$ or $1-\zeta$, respectively. The CDF $G_1(y | k; \bt)$ and PDF $g_1(y | k; \bt)$ for the observed values $Y=y$ given $K=k$ are, respectively, as follow
\begin{align*}
G_1(y| k; \bt)&=
\zeta F^k(y; {\bt})+(1-\zeta)(1-\bar F^k(y; {\bt})),~ \text{and}\\
g_1(y | k; \bt)
&=kf(y; {\bt})\left\{\zeta F^{k-1}(y; {\bt})+ (1-\zeta)\bar F^{k-1}(y; {\bt})\right\}.
\end{align*}
The log likelihood function  based on ${\boldsymbol y}=(y_1, y_2, \ldots, y_m)$ and ${\boldsymbol k}=(k_1, k_2, \ldots, k_m)$ is given by
\begin{align}\label{T1logL}
L_1^{RNS}&=\sum_{i=1}^m\log k_i+\sum_{i=1}^m\log f(y_i; {\bt})
+\sum_{i=1}^m\log\left\{\zeta F^{k_i-1}(y_i; {\bt})+(1-\zeta)\bar F^{k_i-1}(y_i; {\bt})\right\}.
\end{align}
Upon differentiating (\ref{T1logL}) with respect to ${\bt}$ and equating the result to zero, we have

\begin{align*}
\frac{\partial L_1^{RNS}}{\partial{\bt}}&=\sum_{i=1}^m\frac{f'(y_i; {\bt})}{f(y_i; {\bt})} \nonumber\\
&+\sum_{i=1}^m \left\{\frac{\zeta(k_i-1)F^{k_i-2}(Y_i; {\bt})F'(Y_i; {\bt})+(1-\zeta)(k_i-1)\bar F^{k_i-2}(Y_i; {\bt})\bar F'(Y_i; {\bt})}{\zeta F^{k_i-1}(y_i; {\bt})+(1-\zeta)\bar F^{k_i-1}(y_i; {\bt})}\right\}\\
&=0, 
\end{align*}
which does not yield explicit solutions for $\hat\bt$. In order to find the ML estimators, we use the EM algorithm.
In order to pose this problem in the incomplete-data Type I context, we introduce the unobservable data ${\bf z}=(z_1, z_2, \ldots, z_m)$, where $z_i=1$ or $z_i=0$ 
according to whether the selected unit in the 
$i$-{th} set is the maximum or the minimum, respectively.  
We find the ML estimator of ${\bt}$ by adding the unobservable data to the problem via working with the current conditional expectation of the complete-data log likelihood (\ref{logLcomplete}) given the observed data and proceed as follow. 
 Let ${\bt}^{(t)}$ be  the value specified  for ${\bt}$ in the $t$-{th} iteration. Then on the $(t+1)$-{th} iteration,  the conditional expectation of $L_c^{RNS}$ given $\boldsymbol y$ and $\boldsymbol k$ using ${\bt}^{(t)}$, i.e., $\E_{{\bt}^{(t)}}\left[L_c^{RNS} | \boldsymbol y, \boldsymbol k\right]$ is computed. This step is called the E-step.
As $L_c^{RNS}$ is a linear function of the unobservable data $\boldsymbol z$, the E-step is performed  simply by replacing $z_i$ by their current conditional expectations given the observed data $\boldsymbol y$ and $\boldsymbol k$.
Therefore, for a known parameter $\zeta$, we have
\begin{align*}
\E_{{\bt}^{(t)}}[Z_i  |  {\boldsymbol y}, {\boldsymbol k}]= \p_{{\bt}^{(t)}}(Z_i=1 | y_i, k_i)=z_i^{(t)},
\end{align*}
where
\begin{align*}
z_i^{(t)}=\frac{g_c(y_{i}| k_i, z_i=1, \bt=\bt^{(t)})}{g_1(y_{i} | k_i, \bt=\bt^{(t)})}=\frac{\zeta F^{k_i-1}(y_{i}; \bt^{(t)})}{\zeta F^{k_i-1}(y_{i}; \bt^{(t)})+(1-\zeta)\bar F^{k_i-1}(y_{i}; \bt^{(t)})}.
\end{align*}
The next step  on the $(t+1)$-{th} iteration, which is called the M-step, requires replacing $z_i$'s by $z_i^{(t)}$'s  in (\ref{logLcomplete}) to obtain ${\bt}^{(t+1)}$ by maximizing $L_c^{RNS}(\bt)$. We keep alternating between  $z_i^{(t)}$ and $\bt^{(t)}$ until $\bt^{(t)}$  converges to a fixed point.

When the parameter $\zeta$ is unknown, the procedure may be started with the initial value of $\zeta^{(0)}\approx 1$ (in PHR model) or $\zeta^{(0)}\approx 0$ (in PRHR model), and continued by updating $\zeta$ using $\zeta^{(t+1)}=\frac1m\sum_{i=1}^mz_i^{(t)}$.

\item {\bf Type II RNS incomplete-data:} 
Here, we consider the case where $(y_i, z_i)$  are known, but the set size $k_i$ is unknown. In other words, we observed the value $y_i$ and we know if the observed unit is the maximum or the minimum unit of the set, but the set size is unknown. The CDF $G_2(y | z; \bt)$ and PDF $g_2(y | z; \bt)$ for the observed value $Y=y$ given $Z=z$ are, respectively, as follow
\begin{align}\label{T2-cdf}
G_2(y | z; \bt)&=\sum_{k=1}^\infty \p(K=k)G_c(y | k, z; \bt)=\sum_{k=1}^\infty\rho_k\left\{F^k(y; \bt)\right\}^z\left\{(1-\bar F^k(y; \bt)\right\}^{1-z},
\end{align}
and
\begin{align*}
g_2(y | z; \bt)
= f(y, \bt)\sum_{k=1}^\infty k \rho_k \left\{F^{k-1}(y; \bt)\right\}^z\left\{\bar F^{k-1}(y; \bt)\right\}^{1-z}.
\end{align*}
From (\ref{T2-cdf}), the log likelihood function for ${\bt}$ obtained from the observed data ${\boldsymbol y}=(y_1, y_2, \ldots, y_m)$ and ${\boldsymbol z}=(z_1, z_2, \ldots, z_m)$ is expressed as
\begin{align}\label{L2}
L_2^{RNS}&= \sum_{i=1}^m\log f(y_i, \bt) 
+\sum_{i=1}^m\log\left[\sum_{k_i=1}^\infty k_i\rho_{k_i}\left\{F^{k_i-1}(y_i, \bt)\right\}^{z_i}\left\{\bar F^{k_i-1}(y_i, \bt)\right\}^{1-z_i}\right].
\end{align}
Upon equating the derivatives of (\ref{L2}) with respect to ${\bt}$ to zero, we have
\begin{align*}
\frac{\partial L_2^{RNS}}{\partial{\bt}}&=\sum_{i=1}^m\frac{f'(y_i, \bt)}{f(y_i, \bt)}\nonumber\\
&+\sum_{i=1}^m (-1)^{1-z_i}f(y_i, \bt)\left\{\frac{\sum_{k_i=1}^\infty k_i(k_i-1)\rho_{k_i}\left\{F^{k_i-2}(y_i,\bt)\right\}^{z_i}\left\{\bar F^{k_i-2}(y_i, \bt)\right\}^{1-z_i}}{\sum_{k_i=1}^\infty k_i\rho_{k_i}\left\{F^{k_i-1}(y_i, \bt)\right\}^{z_i}\left\{\bar F^{k_i-1}(y_i, \bt)\right\}^{1-z_i}}\right\}\nonumber\\
&=0,
\end{align*}
which apparently do not yield explicit solutions for the incomplete-data MLE of ${\bt}$. Since the vector ${\boldsymbol k}=(k_1, k_2, \ldots, k_m)$ is unobservable, we are unable to estimate ${\bt}$ by the maximizing (\ref{logLcomplete}). So we again use the EM algorithm. In the E-step, we substitute the unobservable data in (\ref{logLcomplete}) by averaging the complete-data log likelihood over its conditional distribution given the observed ${\boldsymbol y}$ and ${\boldsymbol z}$. As $L_c^{RNS}$ is a linear function of the unobservable data $\boldsymbol k$, the E-step is performed simply by replacing $k_i$ by their current conditional expectations given the observed data $\boldsymbol y$ and $\boldsymbol z$.
Therefore, for a known parameter $\brho$,
\begin{align*}
\E_{{\bt}^{(t)}}[K_i  |  {\boldsymbol y}, {\boldsymbol z}]= \sum_{k=1}^\infty k\p_{{\bt}^{(t)}}(K_i=k | {\boldsymbol y}, {\boldsymbol z})= k_i^{(t)},
\end{align*}
where 
\begin{align*}
 k_i^{(t)}&= \sum_{k=1}^\infty k\rho_k\frac{g_c(y_{i, t} | z_i, k_i=k, \bt=\bt^{(t)})}{g_2(y_{i, t} | z_i, \bt=\bt^{(t)})}\\
 &=\frac{\sum_{k=1}^\infty k^2\rho_k \left\{F^{k_i-1}(y_{i}; \bt^{(t)})\right\}^{z_i}\left\{\bar F^{k_i-1}(y_{i}; \bt^{(t)})\right\}^{1-z_i}}{\sum_{k=1}^\infty k\rho_k\left\{F^{k_i-1}(y_{i}; \bt^{(t)})\right\}^{z_i}\left\{\bar F^{k_i-1}(y_{i}; \bt^{(t)})\right\}^{1-z_i}}.
\end{align*}

The M-step on the $(t+1)$-{th} iteration requires replacing $k_i$'s by $k_i^{(t)}$'s  in (\ref{logLcomplete}) to obtain ${\bt}^{(t+1)}$ by maximizing $L_c^{RNS}(\bt)$. We keep alternating between  $
k_i^{(t)}$ and $\bt^{(t)}$ until $\bt^{(t)}$  converges to a fixed point.

When the parameter $\brho$ is unknown, the procedure can be started with the initial value $\rho_1^{(0)}=(1/n, \ldots, 1/n)$, where the length of $\rho_1^{(0)}$ is a relatively large but arbitrary $n$, and continued  by calculating $\rho_{k_i}^{(t+1)}$ using  the frequencies of the $k_i^{(t)}$ over $m$.

\item {\bf Type III RNS incomplete-data:} Here, we study the case  where only $y_i$ is observed and no more information on the set size and the rank of the selected unit is  available. The CDF $G_3(y | \bt)$ and PDF $g_3(y | \bt)$ for the observed value $Y=y$  are given, respectively, by
\begin{align*}
G_3(y ; \bt)
&= \zeta \sum_{k=1}^\infty \rho_kF^k(y; \bt)+(1-\zeta)\sum_{k=1}^\infty \rho_k(1-\bar F^k(y; \bt)), 
\end{align*}
and
\begin{align*}
g_3(y ; \bt)
&=f(y; \bt) \left\{\zeta \sum_{k=1}^\infty k \rho_k F^{k-1}(y; \bt) +(1-\zeta)\sum_{k=1}^\infty k \rho_k\bar F^{k-1}(y; \bt)\right\}.
\end{align*}
The log likelihood function for ${\bt}$ formed on the basis of ${\bf y}$ is given by
\begin{align}\label{L3}
L_3^{RNS}&=\sum_{i=1}^m \log f(y_i,\bt)
+\sum_{i=1}^m \log\left\{\zeta \sum_{k=1}^\infty k \rho_k F^{k-1}(y; \bt) +(1-\zeta)\sum_{k=1}^\infty k \rho_k\bar F^{k-1}(y; \bt)\right\}.
\end{align}
Upon equating the derivatives of  (\ref{L3}) with respect to $\bt$ to zero, the following results are obtained: 
\begin{align}\label{L3der}
\frac{\partial L_3^{RNS}}{\partial{\bt}}&=\sum_{i=1}^m \frac{f'(y_i; \bt)}{f(y_i;  \bt)}\nonumber\\
&+\sum_{i=1}^m f(y_i;  \bt)\left\{\frac{\zeta\sum_{k=1}^\infty k(k-1)\rho_k F^{k-2}(y_i;  \bt)-(1-\zeta)\sum_{k=1}^\infty k(k-1)\rho_k\bar F^{k-2}(y_i; \bt)}{\zeta\sum_{k=1}^\infty k\rho_k F^{k-1}(y_i;  \bt)+(1-\zeta)\sum_{k=1}^\infty k\rho_k\bar F^{k-1}(y_i; \bt)}\right\}\nonumber \\
&=0.
\end{align}
Similar to Type I and Type II incomplete-data, no explicit ML estimator for the parameter  ${\bt}$ can be obtained from (\ref{L3der}). In this type of RNS incomplete-data, two unobservable data sets in the form of ${\boldsymbol z}=(z_1, z_2, \ldots, z_m)$ and ${\boldsymbol k}=(k_1, k_2, \ldots, k_m)$ are introduced. In order to 
perform the EM algorithm assuming $\brho$ and $\zeta$ are known, we first calculate
 \begin{align*}
\E_{{\bt}^{(t)}}[K_i  |  {\boldsymbol y}]= \sum_{k=1}^\infty k\p_{\bt^{(t)}}(K_i=k | {\boldsymbol y})=k_i^{(t)},
\end{align*}
where
\begin{align*}
k_i^{(t)}
=\frac{\sum_{k=1}^\infty k^2\rho_k \{\zeta F^{k-1}(y_{i}; \bt^{(t)})+(1-\zeta)\bar F^{k-1}(y_{i}; \bt^{(t)})\}}{\zeta \sum_{k=1}^\infty k \rho_k F^{k-1}(y_{i}; \bt^{(t)}) +(1-\zeta)\sum_{k=1}^\infty k \rho_k\bar F^{k-1}(y_{i}; \bt^{(t)})}.
\end{align*}
Then, we obtain
\begin{align*}
\E_{\bt^{(t)}}[Z_i|{\boldsymbol y}]=\p_{{\bt}^{(t)}}(Z_i=1 | {\boldsymbol y})=z_i^{(t)},
\end{align*}
where 
\begin{align*}
z_i^{(t)}
=\frac{\zeta F^{k^{(t)}_i-1}(y_{i}; \bt^{(t)})}{\zeta F^{k^{(t)}_i-1}(y_{i}; \bt^{(t)})+(1-\zeta)\bar F^{k^{(t)}_i-1}(y_{i}; \bt^{(t)})}.
\end{align*}

 The M-step on the $(t+1)$-{th} iteration requires replacing $z_i$'s by $z_i^{(t)}$'s and $k_i$'s by $k_i^{(t)}$'s  in (\ref{logLcomplete}) to obtain ${\bt}^{(t+1)}$ by maximizing $L_c^{RNS}(\bt)$. We keep alternating between  $z_i^{(t)}$, $k_i^{(t)}$ and $\bt^{(t)}$ until $\bt^{(t)}$  converges to a fixed point.
 When the parameters $\zeta$ and $\brho$ are unknown, similar procedures proposed in Type I incomplete-data (for $\zeta$) and in Type II incomplete-data (for $\brho$) are used.
 \end{enumerate}

\section{RNS-Based MM Estimators}\label{MME}

 Finding the ML estimators of $\bt$ for complete-data case requires finding the roots of the nonlinear equations (\ref{HR-complete}) and (\ref{IHR-complete}), which are cumbersome and computationally expensive.  When the available data is incomplete, the iterative EM algorithm for calculating the ML estimator of $\bt$ is not easy-to-use. 
In this section, we briefly study the MM estimation  based on RNS data for $\gamma(\bt)$ in PHR and PRHR models. The SRS-based  MM estimate of $\gamma(\bt)$, denoted by $\widehat{\gamma_{sm}(\bt)}$ is equal to the SRS-based ML estimate, $\widehat{\gamma_s(\bt)}$, obtained from (\ref{SRS-HR-ML}). In PHR, considering the random variable $\log\bar H(X)$, the MM estimator of $\gamma(\bt)$ can be obtained  by equating the first  moment of the population  to the sample moment as follow
\begin{align*}
\widehat{\gamma_{sm}(\bt)}=-\frac1m\sum_{i=1}^m \log\bar F_0(X_i).
\end{align*}
Similarly, in PRHR model, the MM estimator of $\gamma(\bt)$ is expressed as
\begin{align*}
\widehat{\gamma_{sm}(\bt)}=-\frac1m\sum_{i=1}^m \log F_0(X_i).
\end{align*}
Now, we present the RNS-based complete- and incomplete-data MM estimators of $\gamma(\bt)$ in both PHR and PRHR models.
\begin{theorem}\label{MMestim}
Let $Y_1, \ldots, Y_m$ be  an RNS sample of size $m$ obtained from  a continuous CDF of the family of PHR model or PRHR model.  Further, let $W_i\sim\text{Beta}(\alpha_i-1, \beta_i+1)$ and $W^*_i\sim\text{Beta}(\beta_i+1, \alpha_i-1)$, where $\alpha_i=(1-z_i)(k_i-1)+2$ and $\beta_i=z_i(k_i-1)$. Then, the unbiased MM estimators of $\gamma(\bt)$ in PHR and PRHR models are, respectively, obtained as
\begin{align*}
\widehat{\gamma_m(\bt)}=-\frac1m\sum_{i=1}^m\frac{1}{R_{i}}\log\bar F_0(Y_i)\quad\text{and}\quad\widehat{\gamma_m(\bt)}=-\frac1m\sum_{i=1}^m\frac{1}{R_{i}}\log F_0(Y_i),
\end{align*}
where the value of $R_i$ depends on the RNS data type and the underlying model as presented in Table \ref{MM}.
\begin{table}[t]
\caption{The value of $R_i$ introduced in Theorem \ref{MMestim} for RNS complete-data and Type I, Type II and Type II incomplete-data}
\label{MM}
\centering{
{
\begin{tabular}{ ccc }
  \hline\hline
  RNS Data & \multicolumn{2}{c}{$R_i$} \\
  \cline{2-3}
   & \text{PHR} & \text{PRHR}\\
  \hline
  \hline
  \text{Complete} & $-k_iB(\alpha_i-1, \beta_i+1)\E[\log W_i]$ & $-k_iB(\beta_i+1, \alpha_i-1)\E[\log W^*_i]$  \\
  \text{Type I} & $\zeta\sum_{j=1}^{k_i}\frac1j+(1-\zeta)\frac{1}{k_i}$   &  $(1-\zeta)(\sum_{j=1}^{k_i}\frac1j)+\zeta\frac{1}{k_i}$ \\
  \text{Type II} &  $-\sum_{k_i=1}^\infty\rho_{k_i} k_i B(\alpha_i-1, \beta_i+1)\E[\log W_i]$  &  $-\sum_{k_i=1}^\infty\rho_{k_i} k_i B(\beta_i+1, \alpha_i-1)\E[\log W^*_i]$ \\
  \text{Type III} & $\sum_{k_i=1}^\infty\rho_{k_i} \left\{\zeta\sum_{j=1}^{k_{i}}\frac1j+ (1-\zeta)\frac{1}{k_i}\right\}$ & $\sum_{k_i=1}^\infty\rho_{k_i} \left\{(1-\zeta)\sum_{j=1}^{k_{i}}\frac1j+ \zeta\frac{1}{k_i}\right\}$ \\
  \hline\hline
\end{tabular}
}
}
\end{table}
  \end{theorem}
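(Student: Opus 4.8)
The plan is to show that each individual summand is an unbiased estimator of $\gamma(\bt)$; unbiasedness of $\widehat{\gamma_m(\bt)}$ then follows immediately by linearity of expectation, since each $R_i$ is a constant that does not involve $\bt$. Concretely, for the PHR model it suffices to prove that $\E[\log\bar F_0(Y_i)]=-\gamma(\bt)\,R_i$ under the sampling law appropriate to the given data type, because then $\E[-R_i^{-1}\log\bar F_0(Y_i)]=\gamma(\bt)$ for every $i$. Thus the entire argument collapses to a single expectation computation, repeated four times with progressively more marginalization.

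First I would reduce the problem to a pivot that is free of $\bt$. From (\ref{HR-cdf}) we have $\bar F(x;\bt)=[\bar F_0(x)]^{1/\gamma(\bt)}$, hence $\log\bar F_0(Y_i)=\gamma(\bt)\log\bar F(Y_i;\bt)$, and the law of $\bar F(Y_i;\bt)$ does not depend on $\bt$: conditionally on $K_i=k$ and $Z_i=z$, the quantity $F(Y_i;\bt)$ is the probability integral transform of an extreme order statistic of $k$ i.i.d.\ uniforms, so $\bar F(Y_i;\bt)\sim\text{Beta}(1,k)$ when $z=1$ (a nominated maximum) and $\bar F(Y_i;\bt)\sim\text{Beta}(k,1)$ when $z=0$ (a nominated minimum). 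These are exactly the laws of $W_i$ appearing in (\ref{exp-HPR}), and they give the two building blocks $\E[\log\bar F(Y_i;\bt)\mid K_i=k,Z_i=1]=\mathfrak{D}(1)-\mathfrak{D}(k+1)=-\sum_{j=1}^{k}\tfrac{1}{j}$ and $\E[\log\bar F(Y_i;\bt)\mid K_i=k,Z_i=0]=\mathfrak{D}(k)-\mathfrak{D}(k+1)=-\tfrac{1}{k}$. For the complete-data case this is precisely (\ref{exp-HPR}), which reads $\E[\log\bar F_0(Y_i)]=-\gamma(\bt)R_i$ with $R_i=-k_iB(\alpha_i-1,\beta_i+1)\E[\log W_i]$ (the prefactor $k_iB(\alpha_i-1,\beta_i+1)$ is in fact $1$, since $\alpha_i+\beta_i=k_i+1$, but keeping it matches the notation of (\ref{exp-HPR})).

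For the three incomplete-data types I would obtain $R_i$ by the law of total expectation, using that $K_i$, $Z_i$ and the underlying sample are mutually independent. In Type~I ($k_i$ observed, $z_i$ integrated out) I average the two building blocks with weights $\zeta$ and $1-\zeta$, giving $R_i=\zeta\sum_{j=1}^{k_i}\tfrac{1}{j}+(1-\zeta)\tfrac{1}{k_i}$. In Type~II ($z_i$ observed, $k_i$ integrated out), independence of $K_i$ and $Z_i$ guarantees that the conditional distribution of $K_i$ given $Z_i=z_i$ is still $\brho$, so averaging the conditional expectation against $\rho_k$ yields the series $R_i=-\sum_k\rho_k\,k\,B(\alpha_i-1,\beta_i+1)\E[\log W_i]$. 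In Type~III both labels are integrated out, producing the double average $R_i=\sum_k\rho_k\{\zeta\sum_{j=1}^{k}\tfrac{1}{j}+(1-\zeta)\tfrac{1}{k}\}$. In each case $\E[\log\bar F_0(Y_i)]=-\gamma(\bt)R_i$ with $R_i$ free of $\bt$, which is all that is needed.

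Finally, the PRHR model is the mirror image: (\ref{IHR-cdf}) gives $\log F_0(Y_i)=\gamma(\bt)\log F(Y_i;\bt)$, so now $F(Y_i;\bt)$ is the $\bt$-free pivot, with the roles of maximum and minimum — and hence of $\zeta$ and $1-\zeta$ and of the two Beta shape parameters — interchanged; this is encoded by $W_i^*\sim\text{Beta}(\beta_i+1,\alpha_i-1)$ and reproduces the PRHR column of Table~\ref{MM} verbatim via (\ref{exp-PRH}). I expect the only genuinely delicate step to be the Type~II conditioning: one must argue that observing the label $Z_i=z_i$ leaves the set-size law unchanged at $\brho$ (so that the weights in the series are indeed $\rho_k$), which is exactly where the independence of $K_i$ and $Z_i$ built into the RNS design is used. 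Everything else is the standard Beta-integral and digamma bookkeeping already carried out in (\ref{exp-HPR})–(\ref{exp^2-PRH}).
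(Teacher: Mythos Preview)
Your proposal is correct and matches the paper's (implicit) approach: the paper does not spell out a proof of Theorem~\ref{MMestim} but relies on the expectation identities (\ref{exp-HPR}) and (\ref{exp-PRH}) established earlier, together with straightforward averaging over $Z_i$ and/or $K_i$. Your argument makes exactly this explicit --- reduce via the $\bt$-free pivot $\bar F(Y_i;\bt)$ (resp.\ $F(Y_i;\bt)$), invoke the Beta/digamma calculations of (\ref{exp-HPR})--(\ref{exp^2-PRH}), and then marginalize using independence of $K_i$, $Z_i$, and the sample --- so there is nothing to add.
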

Note that for the RNS complete-data, the variance of  $\widehat{\gamma_m(\bt)}$ in   both PHR and  PRHR models provided in Theorem \ref{MMestim} are derived using (\ref{exp-HPR}), (\ref{exp^2-HPR}), (\ref{exp-PRH}), and (\ref{exp^2-PRH}).


\section{Numerical Studies}\label{numeric}

In this section, we perform numerical studies to compare the performance of the proposed  RNS methods with SRS in estimating some parameters. First, we perform some simulations to examine the performance of RNS compared with SRS under different scenarios of available information about the observations, set sizes, and rank of observations. Then, in a case study we evaluate the precision of the RNS design over the SRS design  in a more complicated scenario in both perfect and imperfect settings.

\subsection{Simulations}\label{simulation}

We first discuss the reduction  in the mean square error (MSE) of the ML estimators in the RNS complete-data in the PHR and PRHR models using the relative precision. The relative precision is defined as the ratio of the RNS-based MSE over the SRS-based MSE such that values less than one are desired.   For the incomplete-data settings, the performance of   MLE's of the population parameters in two distributions are examined; the parameter $\lambda$ in the exponential distribution introduced in Example \ref{exp-exp} and the parameter $\eta$ in the beta distribution introduced in Example \ref{beta-beta}. 
Note that the expected value and the variance of the RNS complete-data in the PHR and PRHR models presented in (\ref{com-exp}), (\ref{com-var}), (\ref{E-PRHR}), and (\ref{V-PRHR}) do not depend on the observed data and are only functions of $K$ and $Z$. In addition, we investigate the role of the RNS design parameters in improving the performance of the RNS-based estimators compared with their SRS counterparts.

In Figure \ref{msefixedk-PHR}, we provide the MSE of $\widehat{\gamma_c(\bt)}$, the estimator of $\gamma(\bt)$  in the RNS setting when the data is complete,  over the MSE of $\hat\gamma_s(\bt)$, the estimator of $\gamma(\bt)$ in the SRS setting, for the PHR (left panel) and PRHR (right panel) models. The relative precision is calculated  for four RNS designs with fixed set sizes as $K=2, 3, 4, 5$ and the proportion of maximums varies in $p=0(0.1)1$. From the results we can compare RNS complete-data with different fixed set sizes and proportion of maximums among themselves and with SRS in terms of the performance of estimators of  $\gamma(\bt)$. For example, in the left panel of Figure \ref{msefixedk-PHR}, which shows the relative precision for the PHR models, it is seen that any RNS design, even with $K=2$ and proportion of maximums $p=0.1$, outperforms SRS. Increasing the set size and the proportion of maximums improves the performance of the RNS complete-data. The best performance pertains to MANS with $K=5$. In the right panel of Figure \ref{msefixedk-PHR}, which shows the relative precision for the PRHR models, similar results are obtained except  that the best performance pertains to MINS with $K=5$.

\begin{figure}
 \centering  \vspace{-1cm}              
  \includegraphics[width=6in, height=2.5in]{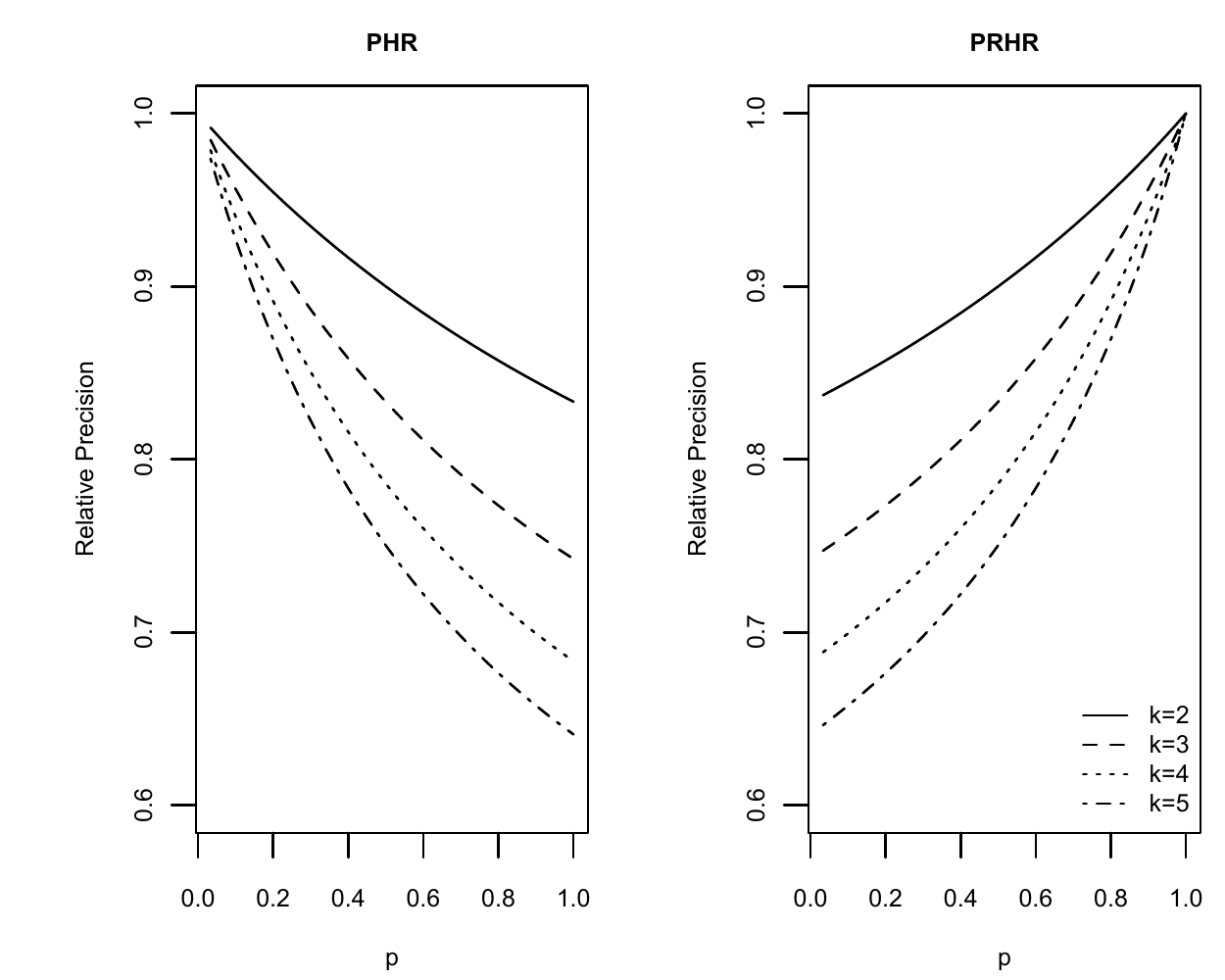} 
  \caption{{The relative precision of the ML estimators of $\gamma(\bt)$ based on the RNS complete-data over  their SRS counterparts in the PHR (left panel) and PRHR (right panel) models when $K=2, 3, 4, 5$ and the proportion of maximums is $p=0(0.1)1$. Values less than one show RNS performs better than SRS.  }}
  \label{msefixedk-PHR}               
\end{figure}

In Figure \ref{PHR-t123}, we provide the relative precision of the ML estimators of $\lambda$ as the parameter of the exponential distribution in  Example \ref{exp-exp} in three RNS incomplete-data Type I, Type II, and Type III.  The relative precision is calculated by the mean square of the RNS-based ML estimate of $\gamma(\theta)$ over  its SRS-based counterpart, so values less than one are desired. The top left panel shows the relative precision of the RNS-based ML estimator of $\lambda$ in the incomplete-data Type I. The relative precision is calculated for the ML estimators of $\lambda=1, 2, 3, 4$ and  $\zeta\in[0, 1]$. It is seen that for the larger values of $\lambda$, the RNS incomplete-data Type I outperforms SRS for any $\zeta\in [0, 1]$.  As $\zeta$ approaches to  1, regardless of the value of the parameter of interest, the performance of RNS incomplete-data Type I becomes better than SRS. The top right panel presents the relative precision of the RNS incomplete-data Type II for the range of $\lambda=1(0.1)5$ and for four distributions of the set size $K$ as follows
\begin{align*}
\brho_1=(0.4, 0.3, 0.2, 0.1),~\brho_2=(0.1, 0.2, 0.3, 0.4),~\brho_3=(0.2, 0, 0, 0.8),~\brho_4=(0, 0, 0, 1).
\end{align*}
It is seen that the RNS incomplete-data Type II  with the assumed $\brho_1$, $\brho_2$, $\brho_3$, and $\brho_4$ improves the precision of the estimators of $\lambda$ especially when the set sizes get larger. As the value of $\lambda$ increases, the performance of the RNS incomplete-data Type II is improved more and the distributions of $K$ perform similarly. The next four panels in Figure \ref{PHR-t123} present the relative precision of the RNS incomplete-data Type III for $\lambda=1(0.1)5$ and $\zeta\in\{0, 0.25, 0.75, 1\}$. The relative precision for small $\lambda$ depends on $\zeta$. The last four panels in Figure \ref{PHR-t123} show that, for all the considered distributions on $K$,  by increasing $\zeta$, RNS outperforms SRS and the relative precision reaches the lowest value when $\zeta=1$.

\begin{figure}
 \centering  \vspace{-1cm}              
  \includegraphics[width=6in, height=7in]{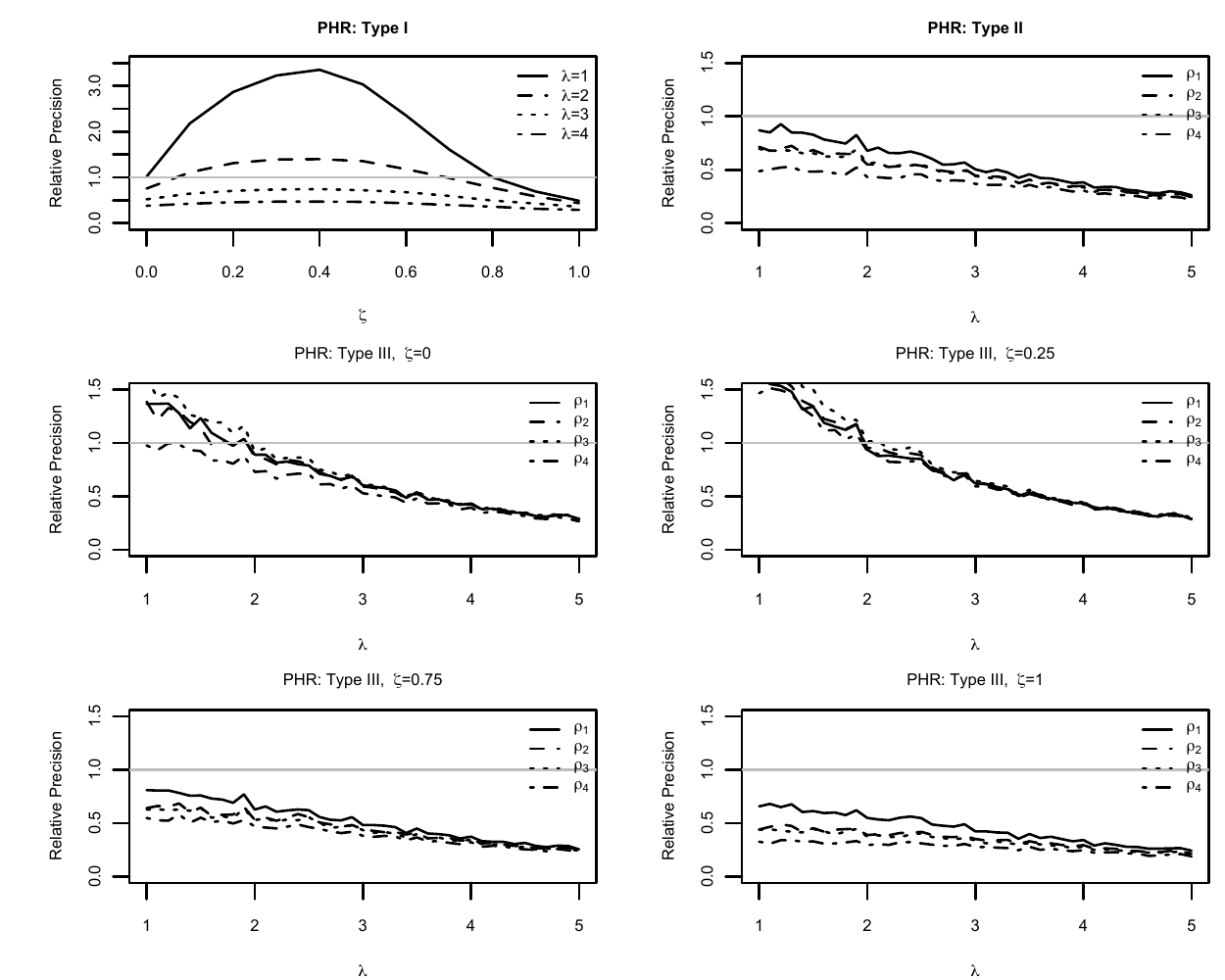} 
  \caption{{The relative precision of the RNS incomplete-data Type I (top left panel) for $\zeta\in[0, 1]$ and $\lambda\in\{1, 2, 3, 4\}$, Type II (top right panel) for four distributions on $K$ and $\lambda=1(0.1)5$, and Type III (middle and lower panels) for $\zeta\in\{0, 0.25, 0.75, 1\}$ and $\lambda=1(0.1)5$ in an exponential distribution with parameter $\lambda$ and $m=10$. Values of the relative precision less than one show RNS performs better than SRS.}}
  \label{PHR-t123}               
\end{figure}

In Figure \ref{PRHR-t123} we present the relative precision of the ML estimators of $\eta$ as the parameter of the beta distribution in the form of Example \ref{beta-beta} for the RNS incomplete-data Types I, II and III. The top left panel shows the relative precision of the RNS-based ML estimator of $\eta=1, 2, 3$ and 4 in the incomplete-data Type I for $\zeta\in[0, 1]$.  It is seen that for  the examined values of $\eta$, $\zeta=0$ improves the RNS incomplete-data Type I over SRS.  The top right panel presents the relative precision of the RNS incomplete-data Type II for the range of $\eta=1(0.1)5$ and for four distributions on $K$, which are shown by $\brho_1$, $\brho_2$, $\brho_3$, and $\brho_4$. For  the examined $\brho$'s, RNS incomplete-data Type II outperforms SRS. The next four panels in Figure \ref{PRHR-t123} present the relative precision of the RNS incomplete-data Type III for $\eta=1(0.1)5$ and $\zeta\in\{0, 0.25, 0.75, 1\}$. It is seen that for $\zeta=0$ the RNS incomplete-data Type III performs better  than SRS. The relative precision of the estimators obtained from the RNS design with $\zeta$ other than zero might works good for some values of $\eta$, especially when $\zeta$ is close to zero.

\begin{figure}
 \centering  \vspace{-1cm}              
  \includegraphics[width=6in, height=7in]{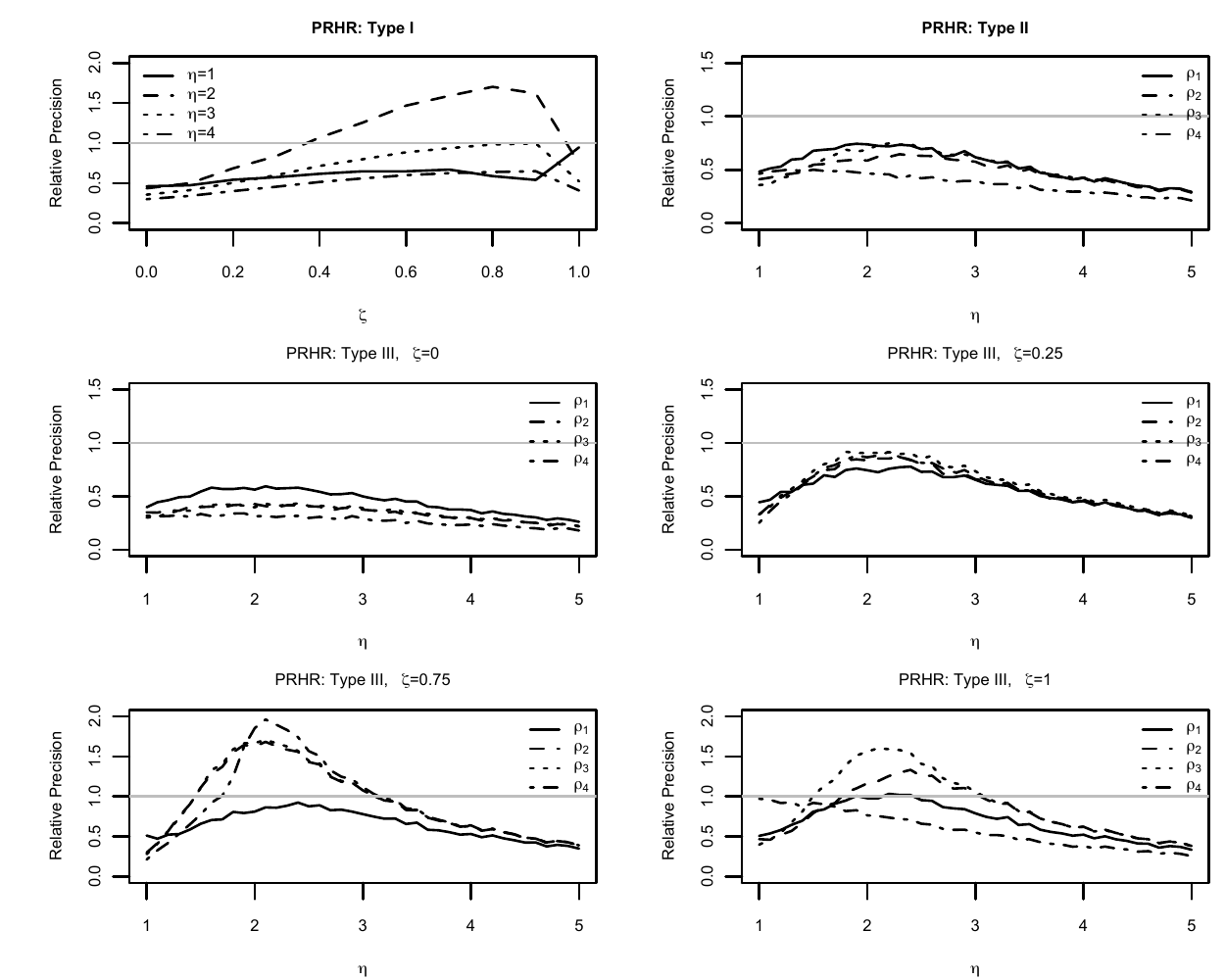} 
 
  \caption{{The relative precision of the RNS incomplete-data Type I (top left panel) for $\zeta\in[0, 1]$ and $\eta\in\{1, 2, 3, 4\}$, Type II (top right panel) for four distributions on $K$ and $\eta=1(0.1)5$, and Type III (middle and lower panels) for $\zeta\in\{0, 0.25, 0.75, 1\}$ and $\eta=1(0.1)5$ in an exponential distribution with parameter $\eta$ and $m=10$. Values of the relative precision less than one shows RNS performs better than SRS.}}
  \label{PRHR-t123}               
\end{figure}

We also evaluated the performance of the RNS-based MM estimators of $\lambda$ and $\eta$. Figure \ref{MME-c} shows the precision of $\widehat{\gamma_{cm}(\bt)}$, the MM estimators of $\gamma(\bt)$  in the RNS setting when the data is complete,  relative to  their SRS counterparts  for the PHR (left panel) and PRHR (right panel) models. The relative precision is calculated  for four RNS designs with fixed set sizes when the sets are of sizes $K=2, 3, 4, 5$ and the proportion of maximums varies in $p=0(0.1)1$. The results show that the RNS design outperforms SRS for all considered distributions of $K$ ($\brho_1$, $\brho_2$, $\brho_3$, and $\brho_4$) and for all proportions of maximums $p\in[0, 1]$.  We observe in the left panel that, similar to the ML estimators in the RNS-based complete-data, increasing the set size and the proportion of maximums improve the performance of the RNS complete-data in the PHR model. In the PHR model, the best performance is obtained from the MANS design, where all the selected units are maximums, and with the set size $K=5$. In the PRHR model  (right panel), the best performance belongs to the MINS design, where all the selected units are minimums,  with the set size $K=5$.

\begin{figure}[t]
 \centering  \vspace{-1cm}              
  \includegraphics[width=6in, height=2.5in]{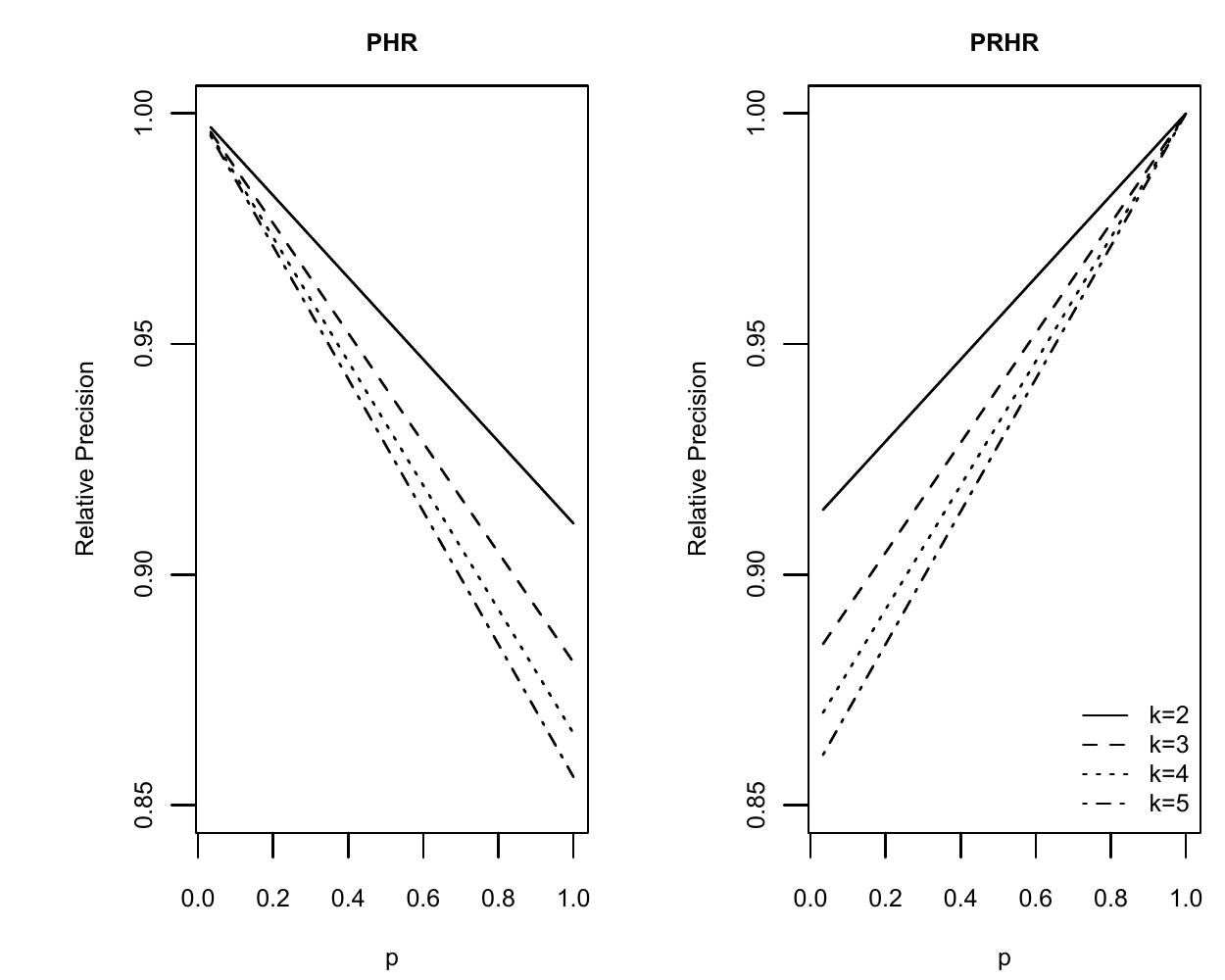} 
  \caption{{The relative precision of the MM estimators of $\gamma(\bt)$  based on the RNS complete-data over  their SRS counterparts in the PHR (left panel) and PRHR (right panel) models when $k_i=2, 3, 4, $ and 5, and the proportion of maximums is $p=0.1(0.1)1$. The relative precision less than one shows RNS performs better than SRS.}}
  \label{MME-c}               
\end{figure}

Figure \ref{MME-t123} provides the relative precision of the MM estimators of parameter $\lambda$ in the exponential distribution introduced in Example \ref{exp-exp} based on the RNS incomplete-data Type I, Type II, and Type III.  The top left panel shows the relative precision of the RNS-based MM estimators in the incomplete-data Type I.  It shows that $\zeta=1$, regardless of the parameter value $\lambda$, is the optimum value of $\zeta$ which improves the RNS-based MM estimator in the incomplete-data Type I scenario compared with SRS.  Looking at the top right panel, which presents the performance of the MM estimator of $\lambda$ in the RNS incomplete-data Type II,  shows that  RNS designs with design parameters $\brho_1$, $\brho_2$, and $\brho_3$ do not perform better than SRS for all the examined parameters $\lambda$.  For $\brho_4$, the performance of the RNS incomplete-data Type II improves over SRS. The next  four panels in the middle and down in Figure \ref{MME-t123} present the performance of the MM estimators of $\lambda$ in the RNS incomplete-data Type II for $\zeta\in\{0, 0.25, 0.75, 1\}$, where $\zeta=1$ has the best impact on the performance of this type of the RNS design especially when the parameter value $\lambda$ increases. For small $\lambda$ the proportion of SRS, samples from the sets of size $K=1$, should be small.

\begin{figure}
 \centering  \vspace{-1cm}              
  \includegraphics[width=6in, height=7in]{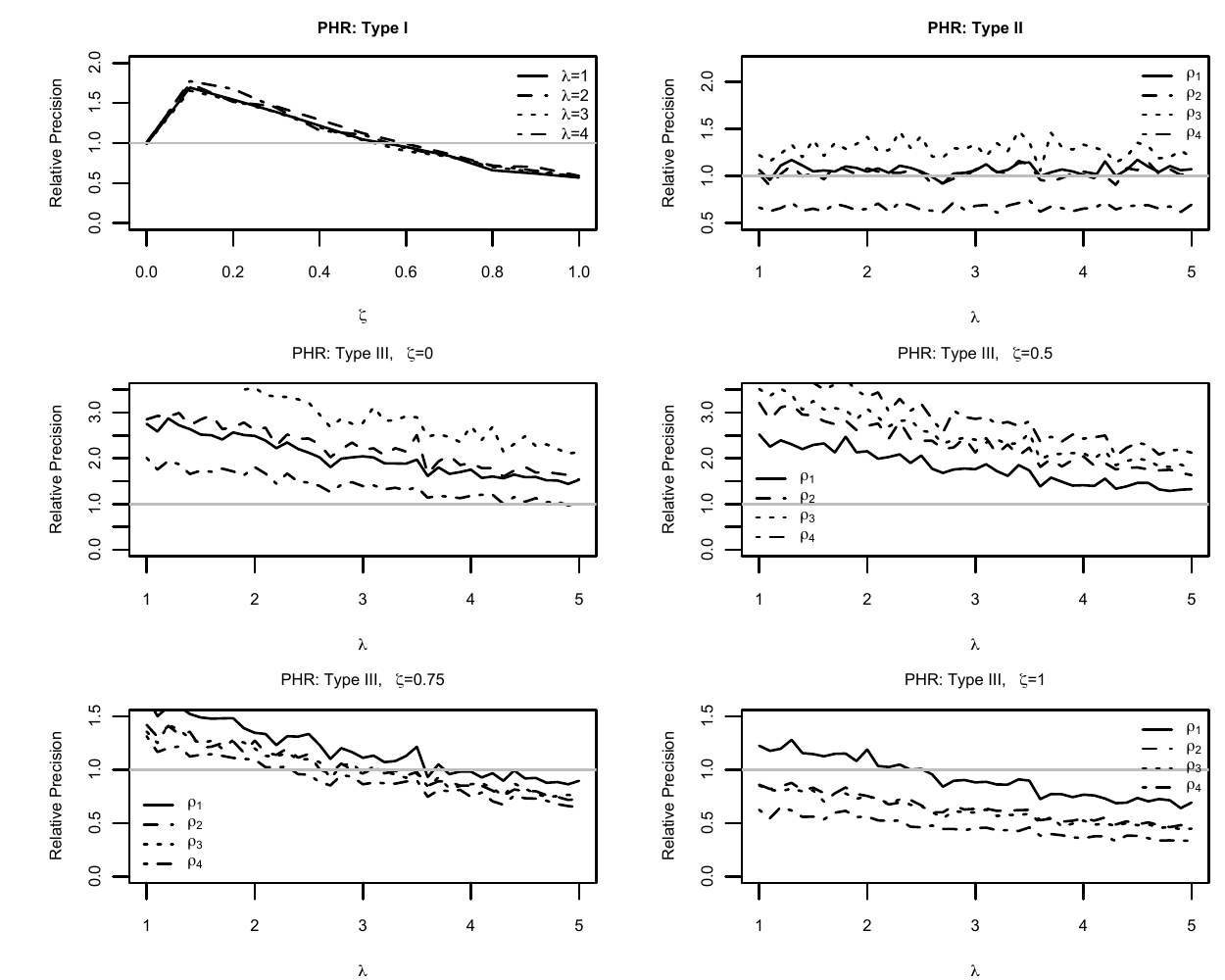} 
 \caption{{The relative precision of the RNS incomplete-data Type I (top left panel) for $\zeta\in[0, 1]$ and $\lambda\in\{1, 2, 3, 4\}$, Type II (top right panel) for four distributions on $K$ and $\eta=1(0.1)5$, and Type III (middle and lower panels) for $\zeta\in\{0, 0.25, 0.75, 1\}$ and $\eta=1(0.1)5$ in a beta distribution with parameter $1/\eta$, the shape parameter $\beta=1$, and $m=10$. Values of the relative precision less than one shows RNS performs better than SRS.}}
  \label{MME-t123}               
\end{figure}

\begin{figure}
 \centering  \vspace{-1cm}              
  \includegraphics[width=6in, height=7in]{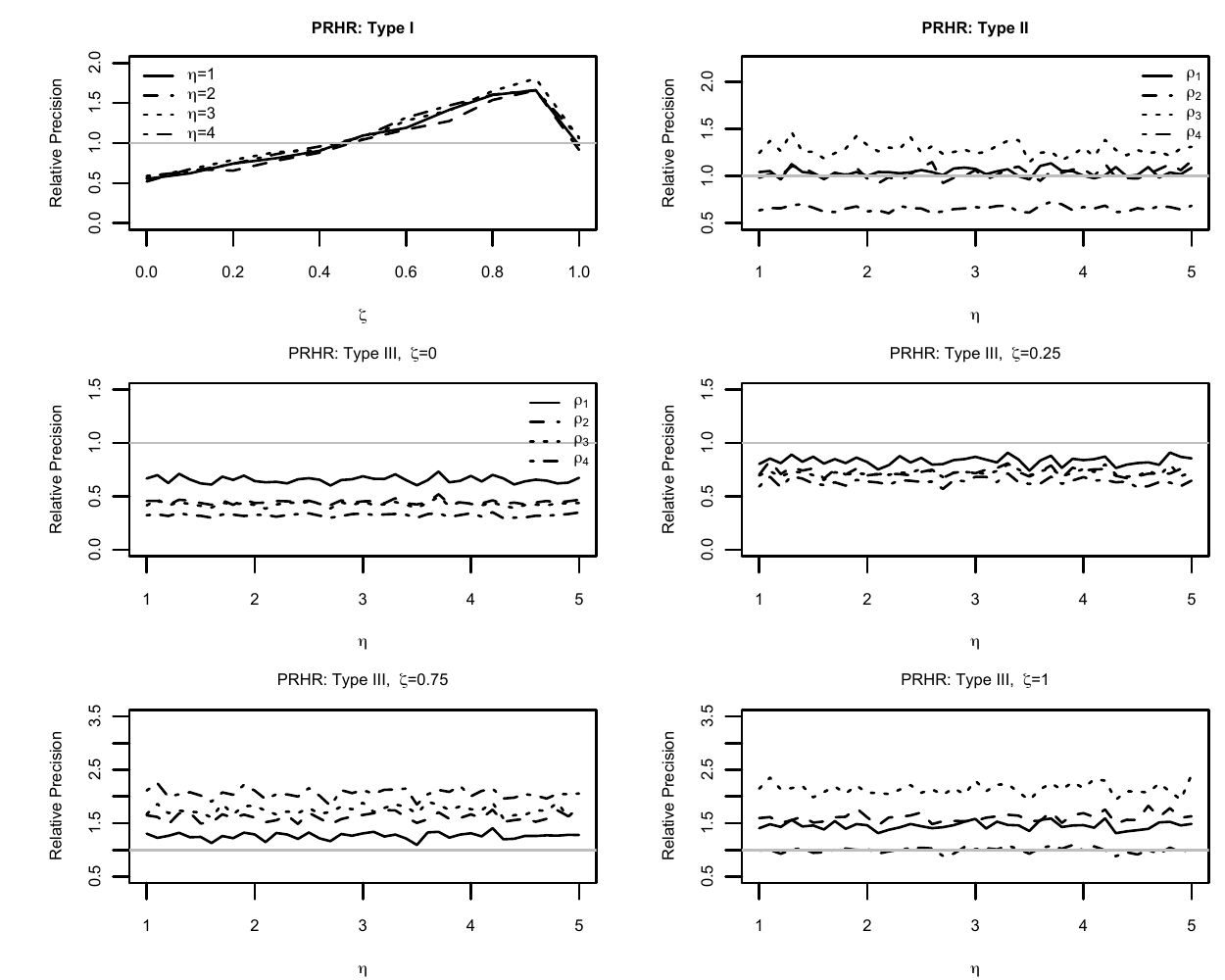} 
 
  \caption{{The relative precision of the RNS incomplete-data Type I (top left panel) for $\zeta\in[0, 1]$ and $\eta\in\{1, 2, 3, 4\}$, Type II (top right panel) for four distributions on $K$ and $\eta=1(0.1)5$, and Type III (middle and lower panels) for $\zeta\in\{0, 0.25, 0.75, 1\}$ and $\eta=1(0.1)5$ in a beta distribution with parameter $1/\eta$, the shape parameter $\beta=1$, and $m=10$. Values of the relative precision less than one shows RNS performs better than SRS. }}
  \label{PRHR-MM-t123}               
\end{figure}

In Figure \ref{PRHR-MM-t123}, the performance of the RNS-based MM estimators of $\eta$ in the introduced beta distribution are compared with their corresponding estimators in the SRS design. To evaluate the relative precision in the RNS incomplete-data Type I, the RNS-based MM estimators of $\eta=1, 2, 3, 4$ for $\zeta\in[0, 1]$ are examined.   The top left panel shows that, no matter what the value of $\eta$ is,  $\zeta=0$ provides the best performance of the RNS incomplete-data Type I compared with the SRS scheme.  Considering the top right panel, which shows the relative precision of the RNS-based MM estimators in the incomplete-data Type II, it is seen that for all assumed distributions on $K$ with larger and fixed set sizes, i.e., $\brho_4$ and regardless of the parameter value $\eta$, the RNS outperforms SRS. The next panels in Figure \ref{PRHR-MM-t123} confirm that $\zeta=0$ provides the most efficient RNS-based MM estimators of $\eta$ in the incomplete-data Type III. As $\zeta$ increases, for all values of $\eta$ the performance of this type of RNS  gets worse.

\subsection{A Case Study}

\begin{figure}
 \centering  \vspace{-1cm}              
  \includegraphics[width=6in, height=2.75in]{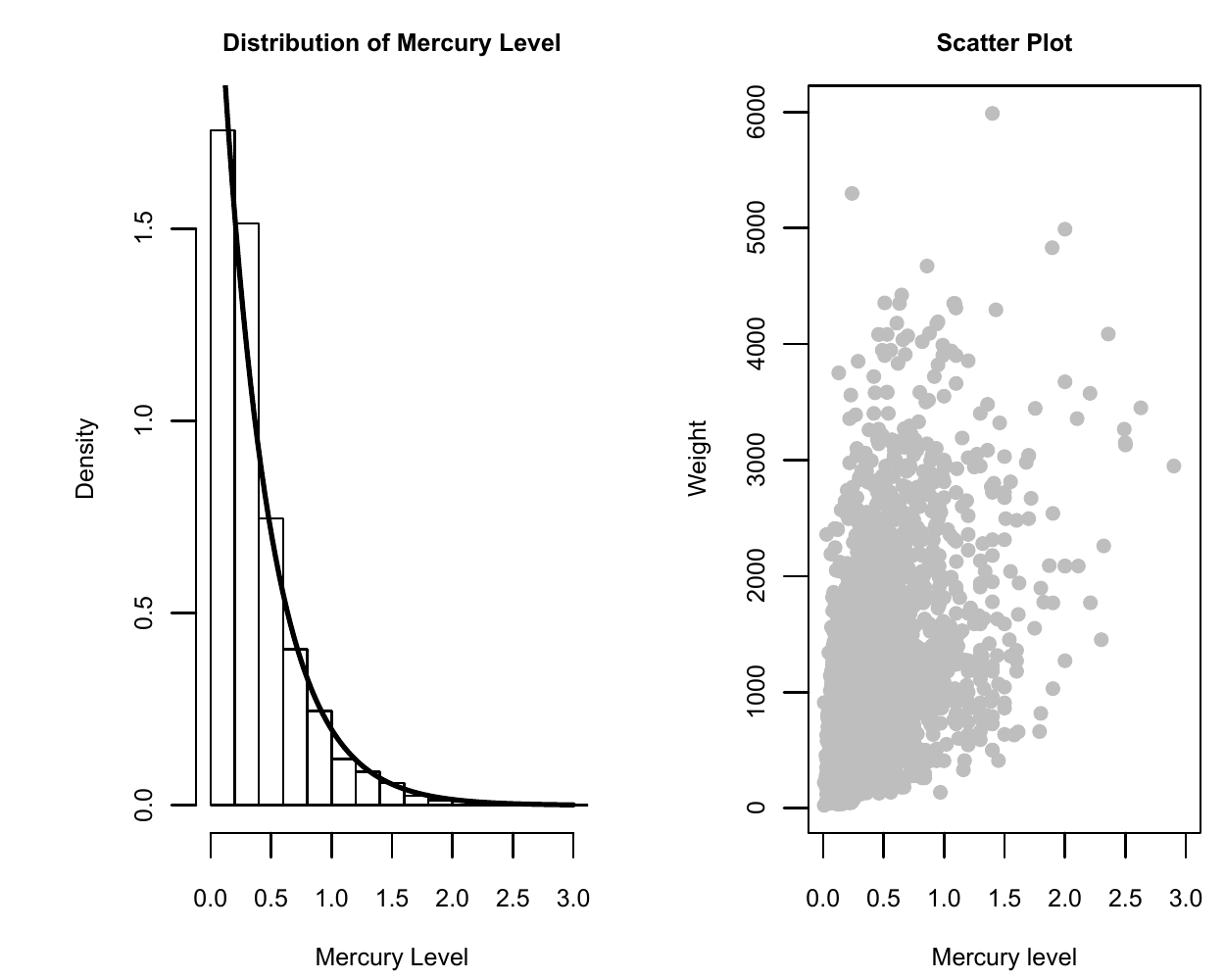} 
 
  \caption{The population shape of the fish mercury level (left panel) and scatterplot between the fish mercury level and fish weight.}
  \label{RD-fig1}               
\end{figure}

Fish products have been shown to contain varying amounts of heavy metals, particularly mercury  from water pollution.  
Mercury is dangerous to both natural ecosystems and humans because it is a metal known to be highly toxic, especially due to its ability to damage the central nervous system. Children, as well as women who are pregnant or planning to become pregnant, are the most vulnerable to mercury's harmful effects. Many studies have been performed to determine the mercury concentration in the fish species and evaluate the performance of the proposed remedial actions (e.g. \citet{bhavsar2010changes}, \citet{mcgoldrick2010canada} and  \citet{nourmohammadi2015distribution}). 

\begin{figure}[t]
 \centering                
  \includegraphics[width=6.5in, height=2.5in]{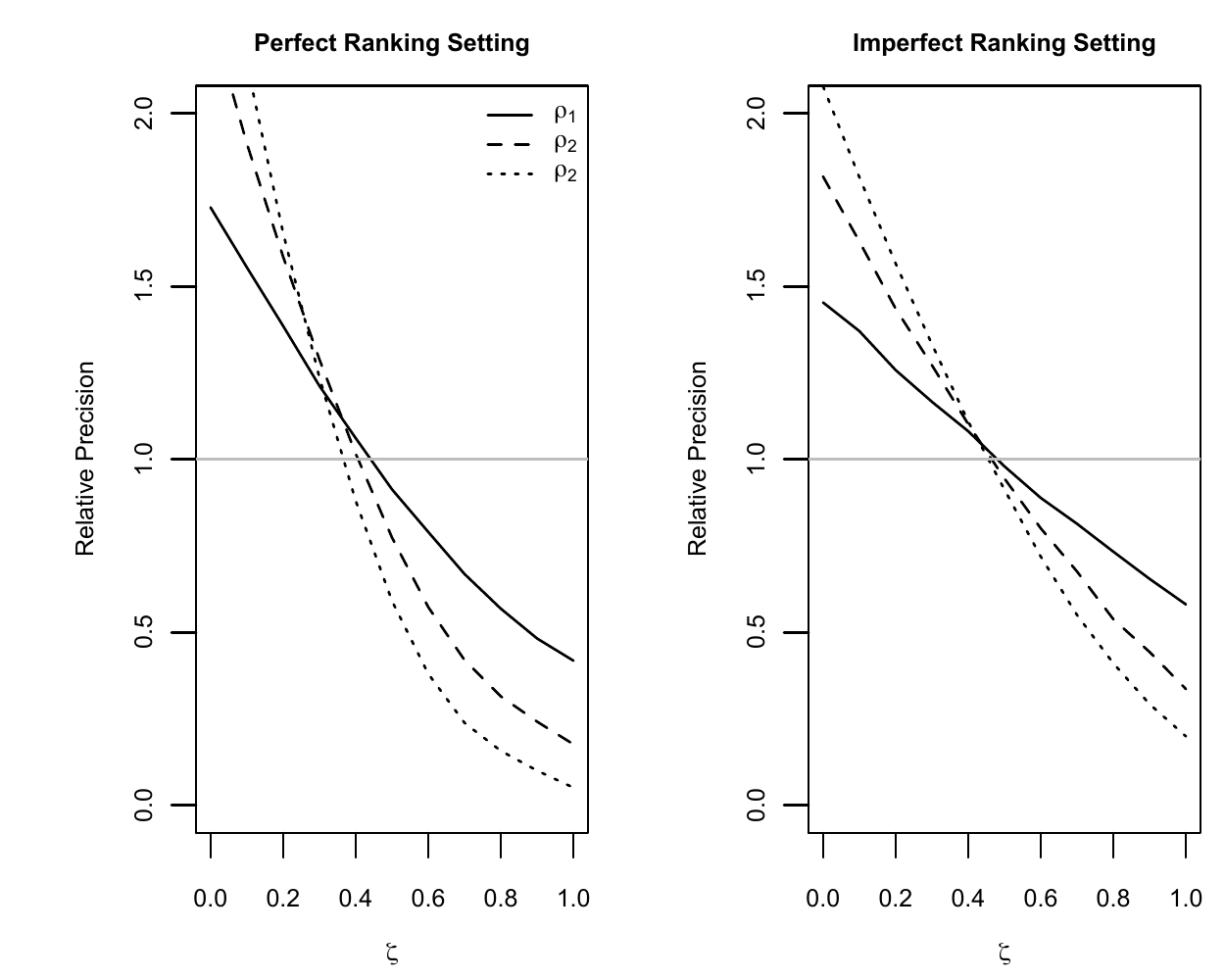} 
 
  \caption{The Relative precision of the ML estimators of $\lambda$ based on RNS incomplete-data Type III over their SRS counterparts in the perfect ranking setting (left panel) and imperfect ranking setting (right panel) when  the parameters $\brho$ and $\zeta$ are assumed to be unknown for $\zeta=0(0.1)1$ and $\brho_1$, $\brho_2$, $\brho_3$.}
  \label{RD-fig2}               
\end{figure}

The mercury grows in concentration within the bodies of fish. It is well-known that measuring the mercury level in fish needs  an expensive and complex laboratory procedure. However, a small group of fish can be ranked based on their    mercury contamination levels before taking the actual measurements on them. This can be done using either  the fish  weight or length which have  positive correlations with mercury contamination levels in  fish. In this section, the performance of the RNS-based modified maximum likelihood estimator of the distribution parameter is compared with its corresponding SRS counterpart. The dataset contains mercury levels, weights, and lengths of 3033 {\it sandra vitreus} (Walleye) caught in Minnesota. The original database contains 102,850 fish tissue mercury records compiled by the United States Geological Survey from various sources. Selected records were those that: had a non-zero mercury concentration; had a valid species identifier; had non-zero length and weight measurements; and had a sample type of ``filet skin on".

In this study both perfect and imperfect ranking settings are considered. In the perfect ranking setting the study variable, i.e. the fish mercury level, is used for ranking the sample units in the sets and for the imperfect ranking setting, ranking is performed using the fish weights. Kendall's $\tau$ between the mercury level and weight values is about 0.4, which is relatively small.  Figure \ref{RD-fig1} (left panel) shows the population shape  for the mercury levels of 3033 fish records and it looks  the fish mercury level follows an exponential distribution, a member of PHR model,  with parameter $\lambda=0.3902$. The choice of exponential distribution for fish mercury level is justified by the Kolmogorov-Smirnov test.  Figure \ref{RD-fig1} (right panel) shows the scatterplot between the mercury level and weight. The RNS complete-data and three RNS incomplete-data scenarios were considered in section \ref{simulation}. In the examined three types RNS incomplete-data, the parameters $\rho$ and/or $\zeta$ are assumed to be known. In this section we examine the performance of RNS incomplete-data Type III in which the parameters $\rho$ and $\zeta$ are unknown. The performance of the design in estimating the population parameter is evaluated using the relative precision, i.e. the mean square of RNS-based MLE of $\lambda$ over the mean square of its SRS counterpart. The relative precision is calculated for the MLE's of $\lambda$  for $\zeta=0(0.1)1$ and three distributions of the set size $K$ as follow
\begin{align*}
\brho_1=(0.4, 0.3, 0.2, 0.1),~\brho_2=(0.1, 0.2, 0.3, 0.4),~\brho_3=(0, 0, 0, 1).
\end{align*}
Figure \ref{RD-fig2} provides the relative precision of the ML estimators of parameter $\lambda$  based on the RNS incomplete-data Type III. It shows that $\zeta=1$, regardless of the parameter value $\rho$, is the optimum value of $\zeta$ which improves the RNS-based ML estimator in the incomplete-data Type II scenario compared with their SRS counterparts. The relative precision of RNS-based estimators presented in Figure \ref{RD-fig2} is obtained in an EM algorithm when the initial values of $\brho$ and $\zeta$ are $\brho_0=(0.25, 0.25, 0.25, 0.25)$ and $\zeta_0=1$. Considering the sensitivity of the EM algorithm to the initial values of the unknown parameters $\brho$ and $\zeta$, we also examined $\brho_0=(1, 0, 0, 0)$ , $\brho_0=(0, 0, 0, 1)$, $\zeta_0=0$ and $\zeta_0=0.5$. Except $\brho_0=(1, 0, 0, 0)$ and $\zeta_0=0$, for the other  examined initial values of $\brho$ and $\zeta$, RNS outperforms SRS for larger values of true $\zeta$'s, i.e. $\zeta\in(0.5, 1]$, and it shows the best performance of RNS over SRS at $\zeta=1$.

\section{Concluding remarks}\label{conclusion}

Randomized nomination sampling (RNS) was introduced by  \citet{jafari2012randomized} and it has been shown to perform better than simple random sampling (SRS) in constructing nonparametric confidence and tolerance intervals. RNS has potentials for a wide range of applications in medical, environmental and ecological studies. 
In this paper, we described the RNS-based ML and MM estimators of the population parameters when the underlying study variable follows PHR or PRHR model. Various conditions on the type of information, ranking error settings and the design parameters including distribution of the set size ($\brho$) and  probability of taking the maximum observation of the set ($\zeta$) have been investigated. We introduced four types of RNS data, corresponding to situations in which all the observations, set sizes and observations ranks in the sets are known, only observations and the set sizes are known, only the observations and their ranks in the sets are known, or finally only the observations are known.  Considering all the situations, we also provided the PDF and CDF of an RNS  observation.  We showed that there is always a range of $\zeta$ on each RNS is superior to SRS in terms of the relative precision. The RNS design has this advantage regardless of the ranking setting.  The relative precision of the estimators obtained in the RNS design becomes better when more weight is given to the larger set size and $\zeta=1$ ( in PHR model) or $\zeta=0$ (in PRHR model).



\section*{Acknowledgements}

The authors gratefully  acknowledge the partial support of  the NSERC Canada.


\bibliographystyle{chicago}

\end{document}